\numberwithin{equation}{section}
    \title{Construction of Currents in Causal Fermion Systems}
    \author[F.\ Finster]{Felix Finster}
    \address{Fakultät für Mathematik \\ Universität Regensburg \\ D-93040 Regensburg \\ Germany}
    \email{finster@ur.de}
    \author[P.\ Fischer]{Patrick Fischer \\ \\ July 2025}
    \address{Fakultät für Mathematik \\ Universität Regensburg \\ D-93040 Regensburg \\ Germany}
    \email{patrick.fischer@ur.de}
\pgfplotsset{compat=newest}
\DeclareMathOperator{\supp}{supp}
\DeclareMathOperator{\Span}{span}
\DeclareMathOperator{\Img}{Img}
\newcommand{\isoeq}{\cong}
\newcommand{\defeq}{\coloneqq}
\newcommand{\eqdef}{\eqqcolon}
\newcommand{\isospectralto}{\asymp}
\newcommand{\containsContribution}{\asymp}
\newcommand{\diracad}[1]{\overline{#1}}
\newcommand{\complexconj}[1]{\overline{#1}}
\newcommand{\dAlembertian}{\Box}
\newcommand{\disjointUnion}{\dot{\bigcup}}
\newcommand{\spinket}[1]{\left| #1 \succ \right.}
\newcommand{\spinbraket}[2]{{\prec #1 | #2 \succ}}
\newcommand{\N}{\mathbb{N}}
\newcommand{\R}{\mathbb{R}}
\newcommand{\C}{\mathbb{C}}
\renewcommand{\H}{\mathcal{H}}
\newcommand{\F}{\mathcal{F}}
\renewcommand{\L}{\mathcal{L}}
\numberwithin{equation}{section}
\newtheorem{definition}{Definition}[section]
\newtheorem{theorem}[definition]{Theorem}
\newtheorem{proposition}[definition]{Proposition}
\newtheorem{lemma}[definition]{Lemma}
\newtheorem{corollary}[definition]{Corollary}
\begin{document}

    \maketitle

    \begin{abstract}
        This paper presents a novel and systematic formalism for deriving classical field equations within the framework of
        causal fermion systems,
        explicitly accounting for higher-order corrections such as quantum effects and those arising from spacetime discreteness.
        Our method, which also generalizes to non-abelian gauge fields and gravitation, gives a systematic procedure for evaluating the
        linearized field equations of causal fermion systems.
        By probing these equations with specific wave functions and employing Taylor expansions, we reformulate them as a
        family of tensorial equations of increasing rank.
        We show that, for rank one, this approach recovers the established classical dynamics corresponding to Maxwell's equations.
        In addition, the approach gives rise to higher-rank tensorial equations, where the second-rank equations are expected to
        encode the Einstein equations, and higher-rank tensors potentially reveal new physics and systematic corrections.
    \end{abstract}

    \tableofcontents

    \section{Introduction}\label{sec:01-introduction}
    The quest for a unified theory of fundamental physics begins with formulating the laws of nature through a variational principle.
    The theory of causal fermion systems offers a novel and mathematically rigorous approach, grounding the formulation of physical
    equations in a fundamental action principle: the \textit{Causal Action Principle}~\cite{intro}.
    This principle extends beyond classical actions, naturally leading to second-quantized equations and potentially incorporating
    Planck-scale corrections, thus offering a deeper understanding of reality that transcends classical field theories.

    Previous work, notably elaborated in~\cite{cfs}, has explored the connection between causal fermion systems and classical field equations.
    However, these connections have not yet provided a systematic framework for deriving higher order corrections
    originating from the quantum nature of spacetime.
    Furthermore, the existing approaches lack a direct link to the abstract analytical structures presented in~\cite{nonlocal},
    which are crucial for describing more general spacetime geometries, including discrete ones.
    Overcoming these limitations is paramount for a systematic study of corrections to the classical field equations.
    These higher order corrections will not only allow for comparison with other effective field theory corrections, but also
    are a way for testing causal fermion systems in experiments.

    To achieve this, our approach leverages the linearized field equations of causal fermion systems.
    By probing these equations with specific wave functions (e.g. $\Psi(x)^*$) and employing Taylor expansions, as detailed in later
    sections, we can reformulate them as a family of equations for higher-ranked tensors.
    This allows us to recover well-known classical equations: tensors of rank one correspond to Maxwell's equations, rank two
    tensors are expected to give rise to Einstein's equations (analogously to the results in~\cite{cfs}), and crucially,
    third and higher-ranked tensors potentially reveal new physics and systematic corrections.
    More precisely, Theorem~\ref{thm:probing} states that the probing defines an equivalence between the linearized field equations
    and the probed CFS-current at any spacetime point.
    In combination with the expansion Theorem~\ref{thm:expansion}, this leads to the equivalence of tensorial equations of the form
    \begin{align}
        \displaystyle J_x (\mathbf{v}) \partial_{\mu_1} \dots \partial_{\mu_n} \Psi(x)^* = 0,
    \end{align}
    where $J_x (\mathbf{v})$ is the CFS current to the linearized field equations.

    For illustrative purposes and to demonstrate the efficacy of our method, we systematically elaborate on the equations for tensors of rank one ($n = 1$).
    This analysis is conducted in the context of $i \varepsilon$-regularized Minkowski spacetime.
    We show that this yields the Maxwell equations
    \begin{align}
        \partial_{\mu} F^{\mu \nu}(x) = \alpha ~ \diracad{\phi(x)} \gamma^{\nu} \phi(x)
    \end{align}
    and we are able to determine the scaling behavior of the coupling constant.
    This constant is shown to be independent of the regularization length $\varepsilon$.
    Most notably, the $i \varepsilon$-regularization breaks the Lorentz invariance whereas the resulting equations
    are Lorentz invariant in the continuum limit.

    This paper is organized as follows: Section~\ref{sec:02-preliminaries} lays the groundwork by introducing the fundamental
    concepts of causal fermion systems and the causal action principle, followed by a discussion of the linearized field equations.
    Section~\ref{sec:03-cfs-current-in-the-general-setting} delves into the core methodology, defining the CFS current in a general setting
    and introducing the concept of probing (Theorem~\ref{thm:probing}).
    This section also discusses the implications of assuming a manifold structure for the causal fermion system which allows
    to extract tensorial equations.
    In Section~\ref{sec:04-iepsilon-regularlized-minkowski-spacetime}, we apply our formalism to the specific case of the $i \varepsilon$-regularized Minkowski spacetime,
    explicitly computing the corresponding perturbation for Dirac particles and vector potentials.
    Section~\ref{sec:the-continuum-limit-of-minkowski-spacetime} analyzes the continuum limit of Minkowski spacetime,
    addressing UV divergences and showing how the Maxwell equations are recovered.
    Finally, Section~\ref{sec:conclusion-and-outlook} provides a conclusion and an outlook on future research directions leading
    towards the completion of the derivation of the Standard Model and General Relativity.
    In addition, we outline how this method can be applied to derive Planck-scale corrections implied by causal fermion systems.

    \section{Preliminaries}\label{sec:02-preliminaries}

    The theory of causal fermion systems presents itself as a compelling candidate for a unifying theory of fundamental physics,
    primarily due to its inherently relational and background-independent formulation.
    At its core, causal fermion systems operates on an abstract separable complex Hilbert space $\H$.
    In contrast to quantum field theories, the vectors of $\H$ are \textit{a priori} functions or sections on a predefined manifold.
    Hence, unlike conventional physical theories that often presuppose a fixed spacetime background, causal fermion systems posits that spacetime itself,
    identified with the support $M \defeq \supp \rho$ of a measure $\rho$ on a specific set of operators $\F(\H)$ is itself a dynamical quantity.
    The spacetime emerges from the intricate `web of correlations' encoded by these operators acting on $\H$.
    This perspective, as detailed in~\cite{cfs-correlations}, asserts that the fundamental properties of the universe are
    encoded solely by these intrinsic correlations, thereby offering a complete and viable ontology that is not reliant on
    continuum limits for its physical interpretation.
    Concrete realizations of $\H$, such as function spaces on Minkowski spacetime, are only introduced in specific limiting cases,
    like the continuum limit.
    This profound independence from a pre-defined background structure and its reliance on fundamental correlational dynamics
    make causal fermion systems particularly well-suited to address long-standing challenges in quantum gravity,
    providing a consistent framework capable of describing both discrete and continuous spacetime structures.

    In this section, we introduce causal fermion systems and the causal action principle.
    We outline the derivations for causal fermion systems relevant for this paper .
    For a more detailed analysis, the reader is referred to the textbooks~\cite{intro, cfs}

    \subsection{Causal Fermion Systems}\label{subsec:causal-fermion-systems}

    As stated above, the theory of causal fermion systems starts with an abstract Hilbert space $\H$.
    On $\H$ we consider operators $\F \subset L(\H)$, then the physical object of a causal fermion system is a measure on $\F$.
    More concretely, we have the following mathematical definition.

    \begin{definition}[Causal Fermion System]
        Let $(\H, \braket{\cdot}{\cdot})$ be a separable complex Hilbert space, $n \in \mathbb{N}$
        and $\F_n \subset L(\H)$ be the set of all symmetric operators on $\mathcal{H}$ of finite rank, which
        (counting multiplicities) have at most $n$ positive and at most $n$ negative eigenvalues.
        Further, let $\rho$ be a regular Borel measure (defined on a $\sigma$-algebra of $\F_n$).
        We refer to $(\H, \F_n, \rho)$ as a \textbf{causal fermion system} of \textbf{spin dimension} $n$.
    \end{definition}

    Already from this definition, we see that a causal fermion system encodes much information.
    First, we refer to the support\footnote{
        The support of a measure $\rho$ is defined as the complement of the largest open set of measure zero, i.e.
        \begin{align*}
            \supp \rho \defeq \F_n \setminus \bigcup \big\{ \Omega \subset \F_n : \Omega \text{ open and } \rho(\Omega) = 0 \big\}.
        \end{align*}
    } $M \defeq \supp \rho$ of the measure $\rho$ as the spacetime of a causal fermion system.
    The reason for this name becomes clear in Section~\ref{sec:04-iepsilon-regularlized-minkowski-spacetime} when we analyze a
    concrete realization of $\H$.
    Considering $\rho$ as physical quantity allows variations of spacetimes $M$.
    Since there are no requirements on the spacetime, causal fermion systems allow for variations in spacetime topologies in
    the action principle.
    In particular, the considerations in the abstract setting hold for both discrete and continuous spacetimes.

    In addition, every spacetime point $x \in M$ determines a subspace $S_x M \defeq x(\H)$, the spin space at $x$.
    Since $x$ has finite rank, $\dim S_x M \leq 2n$.
    We equip every spin space with a spin inner product defined by
    \begin{align}
        \label{eq:def-spin-inner-product}
        \spinbraket{u}{v}_x \defeq -\bra{u} x \ket{v} && \text{ for } u, v \in \H.
    \end{align}
    Having a spin space at every point in $M$ gives rise to a (not necessary smooth) fiber bundle $SM \defeq \disjointUnion_{x \in M} S_x M$.
    In particular, if we consider only operators with rank $2n$, so-called regular operators, then $SM$ is a
    vector bundle with fiber $\C^{2n}$.

    A priori, a vector $u \in \H$ does not encode information about the spacetime.
    However, we use $u$ to define a physical wavefunction
    \begin{align}
        \spinket{\psi^u(x)} \defeq \pi_x \ket{u} \in S_x M && \text{ for } x \in M,
    \end{align}
    where $\pi_x$ is the orthogonal projection onto the image of $x$.
    By construction, $\psi^u$ is a section of the fiber bundle $SM$ for every $u \in \H$.
    Additionally, we introduce the wave evaluation operator
    \begin{align}
        \Psi(x): \H \to S_x M && \Psi(x) \ket{u} \defeq \spinket{\psi^u(x)} && \text{ for } x \in M.
    \end{align}
    It is clear that $\Psi(x) = \pi_x$ and its adjoint is determined by
    \begin{align}
        \spinbraket{\varphi}{\Psi(x) u}_x &= -\bra{\varphi} x \Psi(x) \ket{u} = -\braket{x \varphi}{u} & \text{ for } \varphi \in S_x M \text{ and } u \in H \nonumber \\
        &\Rightarrow \Psi(x)^* = -x|_{S_x M} : S_x M \to \H.
    \end{align}
    Moreover, we can express the operator $x$ in terms of the wave evaluation operator as
    \begin{align}
        \label{eq:local-correlation-operator}
        x = -\Psi(x)^* \Psi(x) : \H \to \H.
    \end{align}

    Another important object, which is relevant for the computation in the rest of the paper, is the kernel of the fermionic projector
    \begin{align}
        \label{eq:def-kernel-of-the-fermionic-projector}
        P(x, y) \defeq -\Psi(x) \Psi(y)^*: S_y M \to S_x M.
    \end{align}
    It is the causal fermion system analog of the two point correlation function in quantum field theory.
    The kernel of the fermionic projector is symmetric (with respect to the spin inner products) in the sense that
    \begin{align}
        P(x, y)^* = P(y, x).
    \end{align}

    So far, all structures are present for every causal fermion system.
    In order to single out the physically admissible causal fermion system, we formulate an action principle in the next section.

    \subsection{The Causal Action Principle}\label{subsec:the-causal-action-principle}

    Although $x, y \in \F$ are symmetric operators, the product $xy$ is in general not symmetric.
    Therefore, the operator $xy$ has complex eigenvalues.
    Since, $x$ and $y$ are finite ranked, $xy$ also has rank $k \leq 2n$.
    We denote the non-vanishing eigenvalues of $xy$ by $\lambda^{xy}_i$ with $i = 1, \dots k$.
    For notational simplicity, from now on we consider $2n$ eigenvalues of $xy$, where $\lambda^{xy}_1, \dots \lambda^{xy}_k \neq 0$
    are the non-vanishing eigenvalues of $xy$.
    For $k < 2n$, we then set $\lambda^{xy}_{k + 1} \dots \lambda^{xy}_{2n} = 0$.
    Based on the eigenvalues of the operator product $xy$, we define the causal action as
    \begin{align}
        \text{Causal action:} && S[\rho] &\defeq \iint_{\F \times \F} \L(x, y) \dd \rho(x) \dd \rho(y); \\
        \text{Causal Fermion System lagrangian:} && \L(x, y) &\defeq \frac{1}{4n} \sum_{i, j = 0}^{2n} \left( \abs{\lambda^{xy}_i} - \abs{\lambda^{xy}_j} \right)^2. \label{eq:lagrangian}
    \end{align}
    The action is varied under the following constraints
    \begin{align}
        \text{Volume constraint:} && \rho(\F) &= 1;\\
        \text{Trace constraint:} && \int_\F \tr_\H x \dd \rho(x) &= 1;\\
        \text{Boundedness constraint:} && \iint_{\F \times \F} \left( \sum_{i = 1}^{2n} \abs{\lambda^{xy}_i} \right)^2 \dd \rho(x) \dd \rho(y) &< \infty.
    \end{align}
    This variational principle is mathematically well-posed if $\H$ is finite-dimensional.
    For the existence theory and the analysis of general properties of minimizing measures we refer to~\cite{discrete, continuum}.

    The constraints can be included into the variation principle by using Lagrange multipliers.
    Then, the derivation of the \ac{el} equations is based on the following proposition.

    \begin{proposition}
        \label{prop:el-equation}
        Let $\rho$ be a minimizer of the causal action principle with Lagrange multipliers $\mathfrak{s}, \mathfrak{r}, \kappa$
        for the volume, trace and boundedness constraints.
        Then the function
        \begin{align}
            \ell(x) \defeq \int_{\F} \L(x, y) - \kappa \left( \sum_{i = 1}^{2n} \abs{\lambda^{xy}_i} \right)^2 \dd \rho(y) - \mathfrak{r} \tr_\H x - \mathfrak{s}
        \end{align}
        is minimal on spacetime $M = \supp \rho$.
        In particular, the Lagrange parameter $\mathfrak{s}$ is chosen such that $\ell|_M = 0$.
    \end{proposition}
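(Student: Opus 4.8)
The plan is to use the general machinery of Lagrange multipliers for constrained minimization together with the fact that $\rho$ is a positive measure, so that pointwise statements on $\supp\rho$ can be extracted from integral inequalities. First I would consider a family of competitor measures of the form $\tilde\rho = (1-\tau)\,\rho + \tau\,\tilde\nu$ for $\tau\in[0,1]$ and a suitable positive measure $\tilde\nu$, and express the condition that the first variation of the combined functional
\begin{align*}
S[\rho] - \kappa \iint_{\F\times\F}\Big(\sum_{i=1}^{2n}\abs{\lambda^{xy}_i}\Big)^2\dd\rho(x)\dd\rho(y)
 - \mathfrak{r}\int_\F \tr_\H x\,\dd\rho(x) - \mathfrak{s}\,\rho(\F)
\end{align*}
is nonnegative for every such variation. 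Using the symmetry of $\L$ and of the boundedness integrand in $x\leftrightarrow y$, the first variation produces exactly a factor of $2$ and an integrand proportional to $\ell(x)$ as defined in the statement.

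Next I would carry out the standard two-sided argument. Testing with $\tilde\nu=\delta_x$ for an arbitrary $x\in\F$ gives $\ell(x)\ge$ (the corresponding integral against $\rho$, i.e. essentially $\int_M\ell\,\dd\rho$ up to the normalization), while testing with measures $\tilde\nu$ supported on $M$ and using that $\rho$ itself is admissible forces the reverse inequality in an averaged sense, so that $\int_M \ell\,\dd\rho \le \ell(x)$ for all $x\in\F$ and $\ell$ is constant $\rho$-a.e.\ on $M$. Since $\ell$ is continuous (this uses continuity of the eigenvalues $\lambda^{xy}_i$ in $x$, which holds for finite-rank operators) and $M=\supp\rho$, the a.e.\ statement upgrades to: $\ell$ attains its minimum over $\F$ at every point of $M$, and this minimum value is the constant $c \defeq \int_M\ell\,\dd\rho$. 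Finally, the freedom in choosing $\mathfrak{s}$ is used: since $\mathfrak{s}$ enters $\ell$ as an additive constant $-\mathfrak{s}$, one normalizes $\mathfrak{s}$ so that $c=0$, giving $\ell|_M=0$ and $\ell\ge 0$ on all of $\F$.

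The main obstacle I anticipate is making the first-variation computation rigorous under the boundedness constraint: one must check that the map $\tau\mapsto S[\tilde\rho_\tau]$ is differentiable at $\tau=0$ and that the boundedness constraint does not merely get incorporated via a multiplier but also guarantees that the relevant integrals are finite, so that the manipulation of $\iint$ into a single $\int$ against $\ell$ is legitimate. A secondary technical point is the treatment of the trace and volume constraints simultaneously — one needs competitor measures that can violate both constraints independently, which is why both multipliers $\mathfrak{r}$ and $\mathfrak{s}$ appear; here I would invoke a general Lagrange-multiplier theorem for minimization on the convex cone of positive measures (as in the cited existence theory \cite{discrete, continuum}) rather than re-proving it. Once the variational inequality $\int_M\ell\,\dd\rho \le \ell(x)$ for all $x\in\F$ is in hand, the remaining steps are routine.
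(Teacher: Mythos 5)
Your sketch is correct and follows essentially the standard argument that the paper itself relies on: the paper does not reproduce a proof but defers to the reference~\cite{nonlocal}, where the statement is established exactly by the route you describe — convex variations $\tilde\rho_\tau=(1-\tau)\rho+\tau\tilde\nu$ of the augmented functional, testing with Dirac measures $\delta_x$ to get $\ell(x)\ge\int_M\ell\,\dd\rho$ for all $x\in\F$, integrating this inequality against $\rho$ to force equality $\rho$-a.e.\ on $M$, upgrading to all of $M=\supp\rho$ by (semi)continuity of $\ell$, and absorbing the constant into the multiplier $\mathfrak{s}$ so that $\ell|_M=0$. The technical caveats you flag (differentiability in $\tau$, the inequality nature of the boundedness constraint, and handling the trace and volume constraints via a multiplier theorem) are precisely the points treated in the cited works, so deferring to them is consistent with what the paper does.
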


    The proof of this proposition can be found in~\cite{nonlocal}.
    A necessary condition for this minimality of $\ell$ is that the first variation of $\ell(x)$ vanishes on $M$, i.e.\
    \begin{align}
        \label{eq:restricted-el-equations}
        \ell(x) = 0 &&\text{ and }&& D \ell(x) = 0 && \text{ for } x \in M \:,
    \end{align}
    referred to as the {\em{restricted \ac{el} equations}}.
    Note, that even tough $M$ might have a discrete structure, $D \ell$ is well-defined as a derivative in $\F$.
    To compute $D \ell$, we make use of~\eqref{eq:local-correlation-operator}.
    Thus, a perturbation $\delta x$ of a spacetime point can be expressed as
    \begin{align}
        \label{eq:spacetime-point-variation}
        \delta x = -D_\mathbf{u} \Psi(x)^* \Psi(x) - \Psi(x)^* D_\mathbf{u} \Psi(x),
    \end{align}
    where $\mathbf{u}$ is the corresponding tangent vector at $x$ in $\F$.
    In addtion, the operator products $AB$ and $BA$ have the same non-zero eigenvalues.
    Therefore,
    \begin{align}
        xy = \Psi(x)^* \Psi(x) \Psi(y)^* \Psi(y) &\isospectralto \Psi(x) \Psi(y)^* \Psi(y) \Psi(x)^* \nonumber \\
        &= P(x, y) P(y, x) \eqdef A_{xy},
    \end{align}
    where $\isospectralto$ means isospectral (in the sense that they have the same non-trivial eigenvalues with the same
    algebraic multiplicities).
    The operator $A_{xy}$, referred to as \textit{closed chain}, is a symmetric operator on the spin space $S_x M$.
    As stated above, the operator $A_{xy}$ and $xy$ have the same, non-vanishing eigenvalues.
    Thus, the Lagrangian $\L$ can be treated as a function of $A_{xy}$ and therefore also of $P(x, y)$.
    The first variation of $\L$ is then given by
    \begin{align}
        \label{eq:first-variation-of-the-lagrangian}
        \delta \L(x, y) = 2 \Re \tr_{S_x M} \left[ Q(x, y) \delta P(x, y)^* \right].
    \end{align}
    Every perturbation $\delta P(x, y)$ is again symmetric in a sense that $\delta P(x, y)^* = \delta P(y, x)$.
    This can be directly seen from the fact that $P(x, y)$ is symmetric or one applies the product rule
    to~\eqref{eq:def-kernel-of-the-fermionic-projector}.
    In addition, since $\L$ is symmetric under the change of its arguments, i.e. $\L(x, y) = \L(y, x)$,
    the kernel $Q(x, y)$ must also be symmetric, i.e.~$Q(x, y)^* = Q(y, x)$.

    Using~\eqref{eq:def-kernel-of-the-fermionic-projector}, we get
    \begin{align}
        D_{1, \mathbf{u}} \L (x, y) = 2 \Re \tr_{S_x M} \left[ Q(x, y) \Psi(y) D_\mathbf{u} \Psi(x)^* \right].
    \end{align}
    where $D_{1, \mathbf{u}}$ denotes a derivative of the first argument of $\L$ in direction $\mathbf{u}$.
    Similarly, the first variation of the trace of $x$ is given by
    \begin{align}
        D_{\mathbf{u}} \tr_{\H} x = 2 \Re \tr_{S_x M} \left[ \Psi(x) D_\mathbf{u} \Psi(x)^* \right],
    \end{align}
    Therefore, the minimality condition of $\ell$ on $M$ implies that
    \begin{align}
        2 \Re \int_\F \tr_{S_x M} \left[ Q(x, y) \Psi(y) D_\mathbf{u} \Psi(x)^* \right] \dd \rho(y) = 2 \mathfrak{r} \Re \tr_{S_x M} \left[ \Psi(x) D_\mathbf{u} \Psi(x)^* \right]
    \end{align}
    for all $\mathbf{u}$.
    Thus,
    \begin{align}
        \label{eq:el-equation}
        (Q\Psi)(x) \defeq \int_\F Q(x, y) \Psi(y) \dd \rho(y) = \mathfrak{r} \Psi(x).
    \end{align}
    We refer to~\eqref{eq:el-equation} as the restricted \ac{el} equations of causal fermion system.

    \begin{proposition}
        \label{prop:abstract-setting-q-and-symmetry-property}
        If $A_{xy}$ is diagonalizable, then
        \begin{align}
            \label{eq:explicit-q-expression}
            Q(x, y) = \frac{1}{n} \sum_{i,j = 1}^{2n} \left( \abs{\lambda^{xy}_i} - \abs{\lambda^{xy}_j} \right) \frac{\complexconj{\lambda^{xy}_i}}{\abs{\lambda^{xy}_i}} \Lambda^{xy}_i P(x, y),
        \end{align}
        where $\Lambda^{xy}_i$ are the eigenspace projectors corresponding to the eigenvalues $\lambda^{xy}_i$.
        In addition, $Q(x, y)$ satisfies
        \begin{align}
            \label{eq:p-q-commute}
            Q(x, y) P(y, x) = P(x, y) Q(y, x),
        \end{align}
        which makes $Q(x, y) P(y, x)$ a symmetric operator w.r.t.~$\spinbraket{\cdot}{\cdot}_x$.
    \end{proposition}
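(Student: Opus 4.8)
The plan is to regard the Lagrangian~\eqref{eq:lagrangian} as a spectral function of the closed chain $A_{xy} \isospectralto xy$, to compute its first variation via first-order eigenvalue perturbation theory, and to read off $Q(x,y)$ by comparison with~\eqref{eq:first-variation-of-the-lagrangian}; the commutation relation~\eqref{eq:p-q-commute} then follows once one observes that $Q(x,y)\,P(y,x)$ is a spectral function of $A_{xy}$ and hence symmetric with respect to $\spinbraket{\cdot}{\cdot}_x$.

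First I would rewrite~\eqref{eq:lagrangian} by expanding the square,
\begin{align*}
    \L(x,y) = \sum_{i=1}^{2n} \abs{\lambda^{xy}_i}^2 - \frac{1}{2n} \Big( \sum_{i=1}^{2n} \abs{\lambda^{xy}_i} \Big)^2,
\end{align*}
so that $\L$ depends on the eigenvalues of $A_{xy}$ alone. Since $A_{xy}$ is diagonalizable, first-order perturbation theory gives $\delta\lambda^{xy}_i = \tr_{S_x M}\big( \Lambda^{xy}_i\,\delta A_{xy} \big)$ (read as the variation of the eigenvalue sum over a degenerate block), and $\abs{\lambda}^2 = \lambda\complexconj{\lambda}$ gives $\delta\abs{\lambda^{xy}_i} = \Re\big( \complexconj{\lambda^{xy}_i}\,\delta\lambda^{xy}_i / \abs{\lambda^{xy}_i} \big)$. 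Collecting both contributions and using $2n\abs{\lambda_i} - \sum_j\abs{\lambda_j} = \sum_j (\abs{\lambda_i} - \abs{\lambda_j})$, the first variation takes the form
\begin{align*}
    \delta\L(x,y) = \Re\tr_{S_x M}\big[ K\,\delta A_{xy} \big], \qquad K \defeq \frac{1}{n} \sum_{i,j=1}^{2n} \big( \abs{\lambda^{xy}_i} - \abs{\lambda^{xy}_j} \big) \frac{\complexconj{\lambda^{xy}_i}}{\abs{\lambda^{xy}_i}}\,\Lambda^{xy}_i.
\end{align*}
The key point is that $K$ is symmetric with respect to $\spinbraket{\cdot}{\cdot}_x$: as $A_{xy}$ is symmetric, its spectrum is invariant under complex conjugation and $(\Lambda^{xy}_i)^* = \Lambda^{xy}_{\bar\imath}$, where $\bar\imath$ labels the eigenvalue $\complexconj{\lambda^{xy}_i}$; re-indexing the sum defining $K^*$ by $i\mapsto\bar\imath$ and using $\abs{\lambda_{\bar\imath}} = \abs{\lambda_i}$ then yields $K^* = K$. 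Finally I would insert $\delta A_{xy} = \delta P(x,y)\,P(y,x) + P(x,y)\,\delta P(x,y)^*$, bring the first summand into the desired shape using cyclicity of the trace together with $\complexconj{\tr M} = \tr M^*$, $P(y,x)^* = P(x,y)$ and $K^* = K$, and note that the two summands then coincide. This gives $\delta\L(x,y) = 2\Re\tr_{S_x M}[ K\,P(x,y)\,\delta P(x,y)^* ]$, and comparison with~\eqref{eq:first-variation-of-the-lagrangian} identifies $Q(x,y) = K\,P(x,y)$, which is~\eqref{eq:explicit-q-expression}.

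For the commutation relation I would multiply~\eqref{eq:explicit-q-expression} by $P(y,x)$ from the right and use $\Lambda^{xy}_i A_{xy} = \lambda^{xy}_i \Lambda^{xy}_i$ to get
\begin{align*}
    Q(x,y)\,P(y,x) = \frac{1}{n} \sum_{i,j=1}^{2n} \big( \abs{\lambda^{xy}_i} - \abs{\lambda^{xy}_j} \big) \abs{\lambda^{xy}_i}\,\Lambda^{xy}_i,
\end{align*}
a spectral function of $A_{xy}$ whose coefficients depend on the $\lambda^{xy}_i$ only through their moduli. Exactly the argument used above for $K$ shows that this operator is symmetric with respect to $\spinbraket{\cdot}{\cdot}_x$. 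On the other hand, the relations $Q(x,y)^* = Q(y,x)$ and $P(x,y)^* = P(y,x)$ established before the proposition give $\big( Q(x,y)\,P(y,x) \big)^* = P(y,x)^*\,Q(x,y)^* = P(x,y)\,Q(y,x)$, so the symmetry of $Q(x,y)\,P(y,x)$ is precisely the identity~\eqref{eq:p-q-commute}.

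The main obstacle lies in the first part: one must justify first-order perturbation theory for the eigenvalues of $A_{xy}$, which is symmetric only with respect to the indefinite spin inner product and hence not diagonalizable in general --- this is exactly the role of the hypothesis --- and one must keep careful track of the conjugate-pair structure of its spectrum (and of the degenerate eigenvalue blocks) to establish the symmetry $K^* = K$. Once these points are settled, the remaining steps are routine trace manipulations.
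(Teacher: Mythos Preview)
Your argument is correct. The derivation of~\eqref{eq:explicit-q-expression} follows the same line as the paper's proof: first-order eigenvalue perturbation $\delta\lambda^{xy}_i=\tr[\Lambda^{xy}_i\,\delta A_{xy}]$, then the chain rule applied to the Lagrangian. You add one ingredient the paper leaves implicit, namely the verification that the two halves of $\delta A_{xy}=\delta P(x,y)\,P(y,x)+P(x,y)\,\delta P(x,y)^*$ contribute equally; your proof of $K^*=K$ via the conjugate-pair structure of the spectrum and $(\Lambda^{xy}_i)^*=\Lambda^{xy}_{\bar\imath}$ is the clean way to justify this, and the identity $\overline{\tr M}=\tr M^*$ (which holds since the spin adjoint differs from the Hermitian adjoint by conjugation with the signature operator) makes the trace manipulation go through.

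For~\eqref{eq:p-q-commute} you take a genuinely different route. The paper observes that $P(x,y)$ intertwines the closed chains, $P(x,y)\,A_{yx}=A_{xy}\,P(x,y)$, hence $\Lambda^{xy}_i\,P(x,y)=P(x,y)\,\Lambda^{yx}_i$, and reads off the commutation relation directly from the explicit formula; the symmetry of $Q(x,y)\,P(y,x)$ is then a consequence. You reverse the logic: you compute $Q(x,y)\,P(y,x)=\tfrac{1}{n}\sum_{i,j}(|\lambda^{xy}_i|-|\lambda^{xy}_j|)\,|\lambda^{xy}_i|\,\Lambda^{xy}_i$, recognise it as a spectral function of $A_{xy}$ with coefficients invariant under $i\mapsto\bar\imath$, conclude symmetry, and then obtain~\eqref{eq:p-q-commute} from $(Q(x,y)\,P(y,x))^*=P(x,y)\,Q(y,x)$. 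The paper's argument is shorter and makes the relation between the two closed chains $A_{xy}$ and $A_{yx}$ explicit, which is conceptually useful later; your argument has the advantage of staying entirely within $S_xM$ and reusing the symmetry mechanism you already set up for $K$.
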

    \begin{proof}
        Under the assumption that $A_{xy}$ is diagonalizable, we can write $A_{xy} = \sum_i^{2n} \lambda^{xy}_i \Lambda^{xy}_i$.
        Then, $\delta \lambda^{xy}_i = \tr \left[ \Lambda^{xy}_i \delta A_{xy} \right]$ and the perturbation of the absolute value
        is given by
        \begin{align*}
            \delta \abs{\lambda_i} &= \Re \tr \left[ \frac{\complexconj{\lambda_i}}{\abs{\lambda_i}} \Lambda_i \delta A_{xy} \right].
        \end{align*}
        By definition, the closed chain is given by $A_{xy} = P(x, y) P(y, x)$ and hence
        \begin{align*}
            \delta A_{xy} &= P(x, y) \delta P(y, x) + \delta P(x, y) P(y, x).
        \end{align*}
        Applying the chain rule to the Lagrangian and plugging in the above expression for the perturbation of the eigenvalues
        we get the Expression~\eqref{eq:explicit-q-expression} for $Q(x, y)$.

        Since $A_{xy}$ and $A_{yx}$ have the same non-vanishing eigenvalues, we get
        \begin{align*}
            \Lambda^{xy}_i P(x, y) = P(x, y)  \Lambda^{yx}_i \Rightarrow Q(x, y) P(y, x) = P(x, y) Q(y, x),
        \end{align*}
        which implies that $P(x, y)$ and $Q(x, y)$ commute in the sense of~\eqref{eq:p-q-commute}.
    \end{proof}

    At this point, we want to emphasize that the restricted \ac{el} equations encode much more information than the corresponding
    equations of the Standard Model.
    A minimizing causal fermion system describes the complete ensemble of wavefunctions in $\H$.
    By describing the complete dynamics of all fermionic states in the system, it also encodes the complete dynamics
    of all bosonic fields.

    \subsection{Linearized Field Equations}\label{subsec:linearized-field-equation}

    The linearized field equations describe infinitesimal variations of the measure $\rho$ that preserve the restricted \ac{el} equations,
    akin to linearizing physical equations in other theories (e.g., linearized gravity).
    These equations are central to the analysis of causal fermion systems, offering both conceptual clarity on the causal
    nature of the dynamics and a computational pathway for explicit calculations.
    As discussed in~\cite{cfs-correlations} we assume the Hilbert space $\H$ to have a very
    large dimension, where most of the states serve as a background, i.e.~can be thought of being geometric in nature.
    On the other hand, the number of remaining states (having particle nature) is very small compared to the number of the background
    states.
    Thus, these states can always be treated as a small perturbation, allowing us to treat them in the linearized regime.

    To derive the linearized field equations, we consider a variation of the measure $\rho$ in the direction of a vector field $\mathbf{v}$.
    For that, the perturbation of $x \in M$ according to $\mathbf{v}$ is again given by~\eqref{eq:spacetime-point-variation}.
    For the measure related to the perturbed system to be a minimizer of the causal action, the restricted \ac{el} Equations~\eqref{eq:el-equation} have to be satisfied.
    This requirement is equivalent to~\cite{nonlocal}
    \begin{align}
        \label{eq:linearized-field-equations}
        \left( (D_{\mathbf{v}} Q)\Psi \right)(x) + (Q D_{\mathbf{v}} \Psi)(x) - \mathfrak{r} D_{\mathbf{v}} \Psi(x) = 0,
    \end{align}
    which we refer to as \textbf{linearized field equations}.
    Note that $Q$ as defined by~\eqref{eq:el-equation} is an integral operator with kernel $Q(x,y)$.
    A crucial characteristic of this fact is that the restricted \ac{el} Equations and the linearized field equations have a non-local nature.
    The non-local structure of these equations was analyzed using methods of functional analysis and Fourier analysis in~\cite{nonlocal}.
    We want to emphasize that the non-locality is short-ranged.
    This fact will also be elaborated further in Section~\ref{subsec:analysis-of-the-uv-divergence}.

    \section{The CFS Current in the General Setting}\label{sec:03-cfs-current-in-the-general-setting}

    In this paper, we systematically analyze contributions to the linearized field equations arising from various perturbations.
    In~\cite{nonlocal}, these contributions were interpreted as inhomogeneities, representing deviations from the \ac{el} equations.
    Instead of treating these as fixed external sources, we view them as distinct terms originating from different physical perturbations.
    The objective is to construct a total perturbation that is a linear combination of these contributions, which precisely
    solves the homogeneous linearized field equations.
    This systematic construction allows us to understand how different physical phenomena contribute to and collectively satisfy
    the fundamental dynamics of the system.
    For this analysis, we introduce the following notation.

    \begin{definition}[CFS Current]
        \label{def:cfs-current}
        Let $(\mathcal{H}, \mathcal{F}, \rho)$ be a causal fermion system and $x \in M$, then
        the \textbf{CFS current} $J_x: T_x \mathcal{F} \to (\mathcal{H} \to S_x M)$ is defined by
        \begin{align}
            J_x (\mathbf{v}) \defeq \left( (D_{\mathbf{v}} Q)\Psi \right)(x) + (Q D_{\mathbf{v}} \Psi)(x) - \mathfrak{r} D_{\mathbf{v}} \Psi(x)
        \end{align}
        for $\mathbf{v} \in T_x \mathcal{F}$.
    \end{definition}
    Note, that the CFS current $J_x(\mathbf{v})$ for a perturbation $\mathbf{v} \in T_x \mathcal{F}$ is an operator from $\mathcal{H} \to S_x M$.
    Thus, it encodes much more information than the usual currents from the Standard Model, where, for example, the electro-magnetic
    current at a point is described by only four real numbers.
    Hence, as we discuss in Section~\ref{sec:conclusion-and-outlook} the CFS current does not only
    determine the currents from the Standard Model but
    also the energy momentum tensor and in principle higher ranked tensors.

    In addition, the computation of the CFS current of different perturbations $\mathbf{v}_i$ allows for the construction
    of a linear combination $\mathbf{v}$ which then solves the linearized field equations~\eqref{eq:linearized-field-equations}
    of causal fermion system, i.e.~$J_x(\mathbf{v}) = 0$.
    This linear combination determines constraining equations for the different perturbations, which correspond to
    the classical equations of motion.

    Mathematically, construction of the linear combination and deducing the constraining equations is possible due to the
    following observation.

    \begin{lemma}
        Let $x \in M$, then the CFS current $J_x$ is $\C$-linear.
    \end{lemma}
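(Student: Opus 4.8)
The plan is to trace the $\C$-linearity through each of the three terms appearing in the definition
\[
J_x(\mathbf{v}) = \bigl((D_{\mathbf{v}} Q)\Psi\bigr)(x) + (Q D_{\mathbf{v}} \Psi)(x) - \mathfrak{r}\, D_{\mathbf{v}}\Psi(x),
\]
and observe that each is the composition of a genuinely $\C$-linear operation (an integral against the kernel $Q(x,y)$, a projection, or a scalar multiplication) with the directional-derivative assignment $\mathbf{v} \mapsto D_{\mathbf{v}}(\cdot)$. So the whole statement reduces to the fact that $\mathbf{v} \mapsto D_{\mathbf{v}} \Psi(x)^*$ and $\mathbf{v} \mapsto D_{\mathbf{v}} \Psi(x)$, as well as the induced $\mathbf{v} \mapsto D_{\mathbf{v}} Q(x,y)$, are $\C$-linear in $\mathbf{v}$. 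First I would recall from~\eqref{eq:spacetime-point-variation} that a tangent vector $\mathbf{v} \in T_x \F$ is, by definition in this formalism, encoded through the pair $\bigl(D_{\mathbf{v}}\Psi(x), D_{\mathbf{v}}\Psi(x)^*\bigr)$ with
\[
\delta x = -\bigl(D_{\mathbf{v}}\Psi(x)\bigr)^* \Psi(x) - \Psi(x)^* D_{\mathbf{v}}\Psi(x),
\]
so that the map $\mathbf{v}\mapsto D_{\mathbf{v}}\Psi(x)$ is linear over $\C$ by construction of the tangent space and the chain rule; linearity of $\mathbf{v}\mapsto D_{\mathbf{v}}\Psi(x)^*$ is the same statement, since taking the (spin-inner-product) adjoint of a fixed operator is $\C$-linear — note that here $\Psi(x)^*$ is $\C$-linear in its \emph{operator} argument, there is no conjugation on the direction $\mathbf{v}$.

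Next I would handle the second and third terms. The third term $-\mathfrak{r}\, D_{\mathbf{v}}\Psi(x)$ is $\mathfrak{r}\in\R$ times the already-established linear map, hence $\C$-linear. For the second term, $(Q D_{\mathbf{v}}\Psi)(x) = \int_\F Q(x,y)\, (D_{\mathbf{v}}\Psi)(y)\, \dd\rho(y)$: the kernel $Q(x,y): S_y M \to S_x M$ is a fixed $\C$-linear operator independent of $\mathbf{v}$, and integration against $\rho$ is $\C$-linear, so this term is $\C$-linear in $\mathbf{v}$ provided $\mathbf{v}\mapsto (D_{\mathbf{v}}\Psi)(y)$ is — which is the case pointwise in $y$ by the previous paragraph. (One should remark that for this argument the variation of a global direction $\mathbf{v}$ restricts at each $y\in M$ to a tangent vector at $y$, and the assignment $\mathbf{v}\mapsto \mathbf{v}|_y$ is linear; this is implicit in the setup of the linearized field equations in Section~\ref{subsec:linearized-field-equation}.)

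The remaining and main point is the first term $\bigl((D_{\mathbf{v}} Q)\Psi\bigr)(x) = \int_\F (D_{\mathbf{v}} Q)(x,y)\,\Psi(y)\,\dd\rho(y)$, where I must argue that $\mathbf{v}\mapsto (D_{\mathbf{v}} Q)(x,y)$ is $\C$-linear. Here the subtlety is that $Q(x,y)$ is built out of $P(x,y)$, $P(y,x)$, the closed-chain eigenvalues $\lambda^{xy}_i$, the absolute values $|\lambda^{xy}_i|$ and the phase factors $\overline{\lambda^{xy}_i}/|\lambda^{xy}_i|$ via the explicit formula~\eqref{eq:explicit-q-expression}, and $|\lambda|$ is \emph{not} a holomorphic function of $\lambda$. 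So I expect the real work to be checking that, despite the non-holomorphic ingredients, the \emph{directional derivative} $D_{\mathbf{v}}$ of $Q(x,y)$ depends $\C$-linearly on $\mathbf{v}$. The mechanism is that $D_{\mathbf{v}}$ acts on everything only through $D_{\mathbf{v}} P(x,y) = -\bigl(D_{\mathbf{v}}\Psi(x)\bigr)\Psi(y)^* - \Psi(x)\bigl(D_{\mathbf{v}}\Psi(y)\bigr)^*$ (and its adjoint $D_{\mathbf{v}} P(y,x)$), which are $\C$-linear in $\mathbf{v}$; the non-holomorphic functions $\lambda\mapsto|\lambda|$ enter through $\delta|\lambda_i| = \Re\,\tr[(\overline{\lambda_i}/|\lambda_i|)\Lambda_i\,\delta A_{xy}]$ as in the proof of Proposition~\ref{prop:abstract-setting-q-and-symmetry-property}, and — this is the key observation — in the combined expression for $\delta Q(x,y)$ the real parts recombine so that the dependence on $\delta A_{xy}$, hence on $\mathbf{v}$, is $\C$-linear overall (the would-be anti-linear pieces cancel, exactly as they must since $Q$ itself is defined without reference to any real structure on the directions, only on the eigenvalues). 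Concretely I would differentiate~\eqref{eq:explicit-q-expression} using the product rule, substitute $\delta A_{xy} = P(x,y)\,\delta P(y,x) + \delta P(x,y)\,P(y,x)$, and read off that every occurrence of $\mathbf{v}$ is through the $\C$-linear maps $D_{\mathbf{v}} P(x,y)$, $D_{\mathbf{v}} P(y,x)$ with no complex conjugation applied to them; together with $\C$-linearity of the $y$-integration this gives $\C$-linearity of the first term, completing the proof.
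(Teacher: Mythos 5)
The paper itself offers no analysis of the kind you attempt: its entire justification of the lemma is the one-line observation that $J_x(\mathbf{v})$ contains the perturbation only through first directional derivatives ($D_{\mathbf{v}}\Psi$ and $D_{\mathbf{v}}Q$), so that $\mathbf{v}\mapsto J_x(\mathbf{v})$ is linear in the direction; no decomposition into holomorphic and non-holomorphic pieces is made or needed for the way the lemma is used (superposing the Dirac and Maxwell perturbations with a real coefficient $\alpha$ in Sections~\ref{subsec:axial-current-for-one-fermion-type} and~\ref{subsec:electromagnetic-interaction}). Measured against this, the problem with your proposal is not that it is longer, but that its pivotal step is an unproven claim that appears to be false. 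You assert that upon differentiating~\eqref{eq:explicit-q-expression} ``every occurrence of $\mathbf{v}$ is through $D_{\mathbf{v}}P(x,y)$, $D_{\mathbf{v}}P(y,x)$ with no complex conjugation applied to them'' and that ``the would-be anti-linear pieces cancel.'' But the variation of the absolute value is $\delta\abs{\lambda_i}=\Re\tr\bigl[(\complexconj{\lambda_i}/\abs{\lambda_i})\Lambda_i\,\delta A_{xy}\bigr]$ and the variation of the phase factor $\complexconj{\lambda_i}/\abs{\lambda_i}$ involves $\delta\complexconj{\lambda_i}=\complexconj{\tr[\Lambda_i\,\delta A_{xy}]}$; both are only $\R$-linear in $\delta A_{xy}$, you exhibit no mechanism by which these conjugated contributions cancel (``exactly as they must'' is not an argument), and the paper's own continuum-limit formula~\eqref{eq:cfs-current-in-the-continuum-limit} contradicts the claim: there the dependence on the perturbation survives precisely as $\Re\tr_{S_xM}\bigl[\Omega^{xy}_{ij}D_{\mathbf{v}}A_{xy}\bigr]$, real part and all.

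A second, smaller error points in the same direction: you justify linearity of $\mathbf{v}\mapsto D_{\mathbf{v}}\Psi(x)^*$ by saying that taking the adjoint of an operator is $\C$-linear; it is anti-linear, $(cA)^*=\complexconj{c}\,A^*$, and correspondingly the second term of $D_{\mathbf{v}}P(x,y)=-D_{\mathbf{v}}\Psi(x)\,\Psi(y)^*-\Psi(x)\,D_{\mathbf{v}}\Psi(y)^*$ depends anti-linearly on $D_{\mathbf{v}}\Psi(y)$. So the literal $\C$-homogeneity you are trying to push through the chain of maps (multiply the direction by $i$, transport via $D_{\mathbf{v}}\Psi$) is not what the computation delivers, and to even pose it you would first have to specify the complex structure on $T_x\F$ (since $\F$ consists of symmetric operators, $i\mathbf{v}$ is not naively again a tangent vector). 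What the lemma actually needs, and what the paper's one-line proof establishes, is linearity in the perturbation: the current contains only first derivatives, directional derivatives are additive and homogeneous in the direction, and every subsequent operation (integration against $\rho$, composition with $\Psi(y)^*$, taking Taylor coefficients) preserves this. Your attempt to upgrade this to genuine complex linearity of $D_{\mathbf{v}}Q$ introduces a claim the paper never relies on and which your argument does not (and, I believe, cannot) substantiate.
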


    This lemma is obviously fulfilled as the CFS current only contains first derivatives.
    In the following, we outline a procedure how to compute the CFS current for a given perturbation.
    Note that all steps performed in this procedure are also linear in $J_x$ and hence linear in the perturbation.
    Therefore, the linearity also holds for the resulting expressions.

    \subsection{Probing the CFS-Current}\label{subsec:probing-the-cfs-current}

    Let $\mathbf{v} \in T_x \F$, then the perturbed causal fermion system is again a minimizer if $J_x(\mathbf{v}) = 0$ for all $x \in M$.
    This equation is, from a purely mathematical point of view, well-defined, but in order to deduce physical constraints
    for $\mathbf{v}$ we need a procedure to extract the information in form of tensorial equations.
    Therefore, we introduce the concept of probing.
    We say, the CFS current is probed at $y \in M$ by computing the operator product $J_x (\mathbf{v}) \Psi(y)^*: S_y M \to S_x M$.
    In contrast to $J_x(\mathbf{v}) = 0$, the equation $J_x (\mathbf{v}) \Psi(y)^* = 0$ involves only an operator of rank
    at most $2n$.

    \begin{theorem}
        \label{thm:probing}
        Let $(\H, \F, \rho)$ be a minimizing causal fermion system, $x \in M$ and $\mathbf{v} \in T_x \F$ a perturbation, then the following
        statements are equivalent
        \begin{enumerate}
            \item $\displaystyle J_x (\mathbf{v}) = 0$
            \item $\displaystyle J_x (\mathbf{v}) \Psi(y)^* = 0$ for all $y \in M$
        \end{enumerate}
    \end{theorem}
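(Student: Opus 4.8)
The plan is to prove the non-trivial implication $(2) \Rightarrow (1)$; the direction $(1) \Rightarrow (2)$ is immediate, since if $J_x(\mathbf{v}) = 0$ as an operator $\H \to S_x M$, then composing with $\Psi(y)^*$ on the right gives zero for every $y \in M$. The key observation for the converse is that $J_x(\mathbf{v}) \colon \H \to S_x M$ is determined by its action on the subspace of $\H$ spanned by the physical wave functions, because the complement of that subspace is annihilated in a controlled way. More precisely, recall from~\eqref{eq:def-kernel-of-the-fermionic-projector} and the surrounding discussion that $\Psi(y)^* = -y|_{S_y M}$ maps $S_y M$ into $\H$, and $x = -\Psi(x)^*\Psi(x)$. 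The first step is therefore to understand the closed linear span
\begin{align*}
    \mathcal{H}_M \defeq \overline{\Span}\, \big\{ \Psi(y)^* \spinket{\chi} : y \in M,\ \chi \in S_y M \big\} \subseteq \H.
\end{align*}
I would argue that $\mathcal{H}_M = \H$, i.e.\ the physical wave functions are complete in $\H$: if $u \in \H$ were orthogonal to $\mathcal{H}_M$, then $\braket{\Psi(y)^*\chi}{u} = 0$ for all $y$ and all $\chi \in S_y M$, hence $\spinbraket{\chi}{\Psi(y)u}_y = -\bra{\chi} y \Psi(y) \ket{u} = 0$ for all such $\chi$, which forces $\Psi(y)u = \pi_y u = 0$ for every $y \in M$, i.e.\ $\psi^u \equiv 0$ on all of spacetime. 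For a minimizing (in particular, regular enough) causal fermion system this implies $u = 0$; this is where I would invoke the standard fact that $\bigcap_{y\in M}\ker \pi_y = \{0\}$, equivalently that $\sup_{y} \|\pi_y u\| > 0$ for $u \neq 0$, which holds because $\rho$ has support $M$ and the local correlation operators generate $\H$ (otherwise $\H$ could be replaced by a proper subspace and the system would not be ``minimal'' in the sense used in~\cite{nonlocal}).

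Granting completeness, the second step is a density argument. Assume $(2)$: $J_x(\mathbf{v})\,\Psi(y)^* = 0$ for all $y \in M$. Then $J_x(\mathbf{v})$ vanishes on every vector of the form $\Psi(y)^*\chi$, hence on their linear span, hence — since $J_x(\mathbf{v})$ is a bounded operator $\H \to S_x M$ (it is built from finitely many bounded pieces: $D_\mathbf{v}\Psi(x)$, $\Psi(x)$, $Q(x,\cdot)$ integrated against $\rho$, all mapping into the finite-dimensional space $S_x M$) — on the closure $\mathcal{H}_M$. By the first step $\mathcal{H}_M = \H$, so $J_x(\mathbf{v}) = 0$ on all of $\H$, which is $(1)$. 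One should check that $J_x(\mathbf{v})$ is genuinely bounded; this follows because $S_x M$ is finite-dimensional (of dimension at most $2n$), so any linear map into it with the right integrability of the kernel $Q(x,y)$ (guaranteed by the boundedness constraint and the minimality assumptions, cf.~\cite{nonlocal}) is automatically bounded, and in fact continuity on the dense subspace already suffices to conclude vanishing on $\H$.

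The main obstacle I expect is the completeness claim $\mathcal{H}_M = \H$: it is ``morally obvious'' but uses that the causal fermion system is built from $\rho$ in a non-degenerate way, and one must be careful about which regularity/minimality hypotheses are actually needed (e.g.\ whether one needs regular operators so that $\pi_x = x|_{S_xM}$ composed appropriately recovers $x$, or whether support considerations alone suffice). A secondary technical point is the interchange of the closure with the action of $J_x(\mathbf{v})$, which is clean once boundedness of $J_x(\mathbf{v})$ is established but should be stated explicitly. I would structure the final write-up as: (i) $(1)\Rightarrow(2)$ trivial; (ii) boundedness of $J_x(\mathbf{v})$ and the reduction to showing $\{\Psi(y)^*\chi\}$ is total in $\H$; (iii) the orthogonality argument proving totality; (iv) conclude $(2)\Rightarrow(1)$.
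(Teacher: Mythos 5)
Your overall strategy coincides with the paper's: $(1)\Rightarrow(2)$ is trivial, and for $(2)\Rightarrow(1)$ one observes that $\Img \Psi(y)^* = S_y M$, so the hypothesis forces $J_x(\mathbf{v})$ to vanish on $\Span\bigl(\bigcup_{y\in M} S_y M\bigr)$, and one concludes by a density/continuity argument. The gap is in your completeness claim $\mathcal{H}_M = \H$, which you correctly reduce to $\bigcap_{y\in M}\ker \pi_y = \{0\}$ but then declare a ``standard fact,'' justified by saying that $\supp\rho = M$ and that otherwise $\H$ could be replaced by a proper subspace so the system ``would not be minimal'' in the sense of~\cite{nonlocal}. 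This is not among the hypotheses: the theorem assumes only that $\rho$ is a \emph{minimizer of the causal action}, and nothing in the definition of a causal fermion system prevents a nonzero $u\in\H$ with $y\,u = 0$ for every $y\in\supp\rho$ (one can trivially enlarge $\H$ by a direction invisible to all operators in the support, without changing the measure or the action). So the non-degeneracy you need is exactly what has to be \emph{proved} from the minimizing property; your appeal to a Hilbert-space ``minimality'' conflates two different notions and leaves the central step unproven.

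The paper closes precisely this gap with Lemma~\ref{lem:existence-of-a-physical-wavefunction}, and the argument is variational: if $u\neq 0$ and $x\,u = 0$ for all $x \in M$, consider the operator $y = \ket{u}\bra{u} \in \F$; then $xy = yx = 0$ for all $x\in M$, so in the function $\ell$ of Proposition~\ref{prop:el-equation} both the Lagrangian term and the boundedness term vanish, leaving $\ell(y) = -\mathfrak{r}\braket{u}{u} - \mathfrak{s} < 0$, contradicting that for a minimizer $\ell$ attains its minimum $\ell|_M = 0$. With that lemma in hand, your orthogonality argument (using the non-degeneracy of $\spinbraket{\cdot}{\cdot}_y$ on $S_y M$) yields exactly Corollary~\ref{cor:span-of-spin-spaces-is-dense}, and the remainder of your write-up --- vanishing on the span plus continuity of $J_x(\mathbf{v})$, a point you treat more explicitly than the paper and which is indeed harmless since $S_x M$ is finite-dimensional --- matches the paper's proof. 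In short, the route is the same; the missing ingredient is the causal-action argument converting ``minimizing'' into ``no vector of $\H$ is annihilated by all spacetime points.''
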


    Before we prove this proposition, we first show the following lemma which is then used in the proof.

    \begin{lemma}
        \label{lem:existence-of-a-physical-wavefunction}
        Let $(\H, \F, \rho)$ be a minimizing causal fermion system, then for every $0 \neq \ket{u} \in \H$ there exists a $x \in M$,
        such that $\spinket{\psi^u(x)} = \Psi(x) \ket{u} \neq 0$.
    \end{lemma}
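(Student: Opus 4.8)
The plan is to argue by contraposition: if $\Psi(x)\ket{u}\neq 0$ for no $x\in M$, i.e.\ $\Psi(x)\ket{u}=0$ for all $x\in M$, then $\ket{u}$ must already be the zero vector, and conversely. The key observation is that the family of wave evaluation operators $\{\Psi(x)\}_{x\in M}$ together reconstructs the inner product on $\H$ — or more precisely, they reconstruct each operator $x=-\Psi(x)^*\Psi(x)$ — so a vector annihilated by all of them is annihilated by all $x\in M$, hence annihilated by the measure $\rho$ in a strong sense.

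Concretely, I would proceed as follows. First, suppose $\Psi(x)\ket{u}=0$ for all $x\in M$. Using the identity $x=-\Psi(x)^*\Psi(x)$ from~\eqref{eq:local-correlation-operator}, this gives $x\ket{u}=0$ for every $x\in M=\supp\rho$. Second, I would integrate against the measure: for any $\ket{w}\in\H$,
\begin{align*}
    \bra{w}\Big(\int_{\F} x \dd\rho(x)\Big)\ket{u} = \int_{\F} \bra{w} x \ket{u}\dd\rho(x) = 0,
\end{align*}
since the integrand vanishes on $\supp\rho$. The operator $T\defeq\int_{\F} x\dd\rho(x)$ is a well-defined positive semi-definite operator on $\H$ (positivity requires a short argument; note each $x$ need not be positive, so here is where one must be careful — see below). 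Third, from $T\ket{u}=0$ and positivity of $T$ one concludes $\braket{u}{T u}=0$, and then that $x\ket{u}=0$ for $\rho$-a.e.\ $x$ is upgraded; but we already have it everywhere on $M$. The real content is to show $T$ is nonzero/injective in the right directions — and here the trace constraint $\int_\F\tr_\H x\dd\rho(x)=1$ is the crucial input: it forces $T\neq 0$, and combined with minimality (which via Proposition~\ref{prop:el-equation} controls the sign of the eigenvalues of the relevant operators on $M$), one gets that $T$ has trivial kernel. Then $T\ket{u}=0$ forces $\ket{u}=0$, contradicting $\ket{u}\neq 0$.

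The main obstacle, and the step requiring genuine care, is the positivity/injectivity of $T=\int_\F x\dd\rho(x)$: the operators $x\in\F_n$ are symmetric but indefinite (up to $n$ negative eigenvalues each), so $T$ is not manifestly positive. The resolution should invoke the properties of \emph{minimizing} measures — specifically that for a minimizer the operators $x\in M$ are positive semi-definite in the relevant sense, or that the boundedness and trace constraints together pin down the signature — a fact established in~\cite{nonlocal, continuum}. Once positivity of $T$ is in hand, the rest is a routine functional-analytic argument: $T\ket{u}=0$ together with $T\geq 0$ gives $\bra{u}T\ket{u}=\int_\F\spinbraket{\psi^u(x)}{\psi^u(x)}_x^{\!\sim}\dd\rho(x)$ (with appropriate signs), forcing $\psi^u\equiv 0$ $\rho$-a.e., and by regularity of $\rho$ and continuity, on all of $M=\supp\rho$; then the trace constraint rules out $\ket{u}$ being a nonzero vector in the kernel of every $x\in M$.
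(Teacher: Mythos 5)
Your reduction to ``$x\ket{u}=0$ for all $x\in M$'' via \eqref{eq:local-correlation-operator} is fine, but from there the proposal has a genuine gap: the whole argument is made to rest on positivity and injectivity of $T\defeq\int_\F x\,\dd\rho(x)$, and neither holds, nor is either ``a fact established'' elsewhere. The spacetime operators of a causal fermion system are intrinsically indefinite --- in the paper's own main example the operators $\hat{x}$ have exactly two positive and two negative eigenvalues, and the spin inner product \eqref{eq:def-spin-inner-product} has signature $(2,2)$ --- so $T$ is not positive semi-definite, and your identity $\bra{u}T\ket{u}=\int\spinbraket{\psi^u(x)}{\psi^u(x)}_x\,\dd\rho$ cannot force $\psi^u\equiv 0$ even if it vanished, because the integrand has no sign. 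Moreover, even granting positivity, the trace constraint only gives $\tr_\H T=1$, i.e.\ $T\neq 0$; that is far from $\ker T=\{0\}$, and Proposition~\ref{prop:el-equation} says nothing about the sign or kernel of the operators $x\in M$, so the sentence ``combined with minimality \dots one gets that $T$ has trivial kernel'' is an unsubstantiated leap --- in fact injectivity of $T$ is essentially a strengthening of the lemma itself, so assuming it is circular. Note also that the statement is genuinely false for non-minimizing measures (one can support $\rho$ on operators with a common kernel vector), so any correct proof must use minimality in a concrete way, which your outline never does.

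The missing idea, and the one the paper uses, is a variational test-point argument: assume $x\ket{u}=0$ for all $x\in M$ and evaluate the function $\ell$ of Proposition~\ref{prop:el-equation} at the rank-one operator $y=\ket{u}\bra{u}\in\F$. Since $xy=yx=0$ for all $x\in M=\supp\rho$, all eigenvalues $\lambda^{xy}_i$ vanish, so both the Lagrangian and the boundedness term drop out of $\ell(y)$, leaving $\ell(y)=-\mathfrak{r}\,\norm{u}^2-\mathfrak{s}<0$. This contradicts the minimality of $\ell$, which by Proposition~\ref{prop:el-equation} attains its minimal value $0$ on $M$. That one-line exploitation of the structure of $\L$ (the Lagrangian sees only the spectrum of the product $xy$) is what replaces all of the positivity machinery you were trying to set up; without it, or some equivalent use of the minimizing property, the argument does not close.
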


    \begin{proof}
        Since $(\H, \F, \rho)$ is a minimizing causal fermion system, by Proposition~\ref{prop:el-equation} $\ell|_M = 0$ and minimal.
        Suppose, there exists a $\ket{u} \in \H$ such that for all $x \in M$, $x \ket{u} = 0$.
        Then, the operator $y = \ket{u} \bra{u} \in \F$ and $yx = xy = 0$ for all $x \in M$.
        Thus, $\ell(y) = -\mathfrak{r} \braket{u} - \mathfrak{s} < 0$.
        This contradicts the assumption, that $\ell|M = 0$ is minimal.
        Hence $(\H, \F, \rho)$ cannot be a minimizing causal fermion system if there exists a $\ket{u} \in \H$ such that $\Psi(x) \ket{u} = 0$ for all $x \in M$.
    \end{proof}

    This lemma not only shows that there is no trivial physical wavefunction, but also that union of all spin spaces spans
    a dense subspace of $\H$.
    This is the key point for the proof of Theorem~\ref{thm:probing}.

    \begin{corollary}
        \label{cor:span-of-spin-spaces-is-dense}
        Let $(\H, \F, \rho)$ be a minimizing causal fermion system, then
        \begin{align}
            \Span\left( \bigcup_{x \in M} S_x M \right) \text{ is dense in } \H.
        \end{align}
    \end{corollary}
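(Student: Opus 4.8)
The plan is to deduce the density statement directly from Lemma~\ref{lem:existence-of-a-physical-wavefunction} by a standard orthogonal-complement argument. Let $V \defeq \Span\big(\bigcup_{x\in M} S_x M\big) \subset \H$; we want to show $\overline{V} = \H$, which is equivalent to $V^\perp = \{0\}$. So I would take an arbitrary $\ket{u}\in V^\perp$ and aim to conclude $\ket{u} = 0$.

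First I would unravel what $\ket{u}\in V^\perp$ means concretely. For each $x\in M$, the spin space is $S_x M = x(\H) = \Img(x)$, and by~\eqref{eq:local-correlation-operator} (equivalently, since $x$ is symmetric) the orthogonal projection onto $S_x M$ is exactly $\pi_x = \Psi(x)$. Thus $\ket{u}\perp S_x M$ for every $x\in M$ says precisely that $\pi_x\ket{u} = \Psi(x)\ket{u} = \spinket{\psi^u(x)} = 0$ for all $x\in M$; in other words, the physical wavefunction $\psi^u$ vanishes identically on spacetime. I would spell out the first equivalence here carefully: $\ket{u}\perp S_x M \iff \braket{u}{x v} = 0$ for all $v\in\H \iff \braket{x u}{v} = 0$ for all $v \iff x\ket{u} = 0$, using symmetry of $x$; and $x\ket{u}=0$ for all $x\in M$ is the hypothesis of Lemma~\ref{lem:existence-of-a-physical-wavefunction} in its contrapositive form.

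Now Lemma~\ref{lem:existence-of-a-physical-wavefunction} states that for every $0\neq\ket{u}\in\H$ there is some $x\in M$ with $\Psi(x)\ket{u}\neq 0$. Its contrapositive is: if $\Psi(x)\ket{u} = 0$ for all $x\in M$, then $\ket{u} = 0$. Combining this with the previous paragraph, any $\ket{u}\in V^\perp$ must satisfy $\ket{u} = 0$, hence $V^\perp = \{0\}$ and $V$ is dense in $\H$. I do not expect a genuine obstacle here — the content is entirely in Lemma~\ref{lem:existence-of-a-physical-wavefunction}, which has already been established — so the only care needed is the bookkeeping identifying $S_x M$ with $\Img(x)$ and the projection $\pi_x$ with $\Psi(x)$, and making sure the equivalence between "$\ket{u}$ annihilated by all $x\in M$" and "$\ket{u}$ orthogonal to all spin spaces" is stated cleanly. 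One could alternatively phrase the whole argument as: $V^\perp = \bigcap_{x\in M}\ker(\pi_x) = \bigcap_{x\in M}\ker(x) = \{0\}$, where the last equality is exactly Lemma~\ref{lem:existence-of-a-physical-wavefunction}.
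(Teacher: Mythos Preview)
Your proposal is correct and follows essentially the same orthogonal-complement argument as the paper: take a vector in the orthogonal complement of the span, observe that it is annihilated by every $\Psi(x)$, and invoke Lemma~\ref{lem:existence-of-a-physical-wavefunction} (in contrapositive form) to conclude it vanishes. You spell out the bookkeeping (the equivalence $\ket{u}\perp S_xM \iff x\ket{u}=0 \iff \pi_x\ket{u}=0$) more explicitly than the paper does, but the logical structure is identical.
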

    \begin{proof}
        Let $W \defeq \overline{\Span\left( \bigcup_{x \in M} S_x M \right)}$ and $\ket{u} \in W^\perp$.
        Suppose, $\ket{u} \neq 0$, then by Lemma~\ref{lem:existence-of-a-physical-wavefunction} there exists a $y \in M$
        such that $\pi_y \ket{u}  = \Psi(y) \ket{u} \neq 0$.
        However, this contradicts the assumption that $\ket{u} \in W^\perp$.
        Thus, we have $W^\perp = \{0\}$ and therefore, $W = \H$.
    \end{proof}

    With this corollary, we can now prove Theorem~\ref{thm:probing}.

    \begin{proof}[Proof of Theorem~\ref{thm:probing}]
        ``$(1) \Rightarrow (2)$'': This implication is trivially satisfied.

        \noindent
        ``$(2) \Rightarrow (1)$'': By definition of $\Psi(y)^*$, we have $\Img(\Psi(y)^*) = S_y M$.
        Thus, the equation $J_x (\mathbf{v}) \Psi(y)^* = 0$ for all $y \in M$ implies that
        $J_x (\mathbf{v})$ must be zero on
        \begin{align}
            \bigcup_{y \in M} \Img(\Psi(y)^*) = \bigcup_{y \in M} S_y M.
        \end{align}
        By linearity, this implies that $J_x (\mathbf{v})$ is zero on $\Span \left( \bigcup_{y \in M} S_y M \right)$,
        which by Corollary~\ref{cor:span-of-spin-spaces-is-dense} is dense in $\H$.
        Thus, $J_x (\mathbf{v}) = 0$ on all of $\H$.
    \end{proof}

    This allows us to solve $J_x (\mathbf{v}) \Psi(y)^* = 0$ for all $y \in M$ instead of solving $J_x(\mathbf{v}) = 0$.
    In particular, we can rewrite $J_x (\mathbf{v}) \Psi(y)^*$ as
    \begin{align}
        J_x (\mathbf{v}) \Psi(y)^* &= \left((D_\mathbf{v} Q) \Psi\right)(x) \Psi(y)^* + (Q D_\mathbf{v} \Psi)(x) \Psi(y)^* \nonumber \\
        &~~~- \mathfrak{r} D_\mathbf{v} \Psi(x) \Psi(y)^*\\
        &= \int_\F D_\mathbf{v} Q(x, z) P(z, y) + Q(x, z) D_\mathbf{v} P(z, y) \dd \rho(z) \nonumber \\
        &~~~- \mathfrak{r} D_\mathbf{v} P(x, y) - \left( Q \Psi(x) - \mathfrak{r} \Psi(x) \right) D_\mathbf{v} \Psi(y).
    \end{align}
    Note, that the last term in brackets vanishes due to the \ac{el} equations for the unperturbed system.
    In this way, we can express $J_x(\mathbf{v}) \Psi(y)^*$ in terms of the fermionic projector $P(x, y)$ and its perturbation $D_\mathbf{v} P(x, y)$
    along $\mathbf{v}$.

    In addition, in case of a manifold structure of $M$, we can compute derivatives of $\Psi(y)^*$ to derive tensorial equations.
    This is a key observation to relate to the equations of motion of the Standard Model.

    \subsection{Causal Fermion Systems with Manifold Structure}\label{subsec:cfs-with-manifold-structure}

    If we assume, that $M$ has a smooth manifold structure then we can use a Taylor expansion of $\Psi(y)^*$ around $x$.
    As first given in~\cite{fockbosonic} the definition of a smooth manifold structure is as follows.

    \begin{definition}
        \label{def:smooth-manifold-structure}
        The spacetime $M$ of a causal fermion system $(\H, \F_n, \rho)$ has a \textbf{smooth manifold structure} if the following conditions hold:
        \begin{itemize}
            \item[(i)] $M$ is a $k$-dimensional smooth, oriented and connected submanifold of $\F$.
            \item[(ii)] In a chart $(\phi, U)$ of $M$, the measure $\rho$ is absolutely continuous with respect to the Lebesgue
            measure with a smooth, strictly positive weight function $h$:
            \begin{align}
                \left( \phi_* \dd \rho \right)(x) = h(x) \dd^k \lambda(\phi(x)) && \text{ with } h \in C^\infty(M, \R^+).
            \end{align}
        \end{itemize}
        If in addition, the manifold $M$ is topologically of the form $M = \R \times N$
        with a manifold $N$ which admits a complete Riemannian metric $g_{N}$,
        then spacetime is said to admit a \textbf{global time function}.
    \end{definition}

    For this paper, only property is (i) of Definition~\ref{def:smooth-manifold-structure} is relevant for the following
    considerations.
    In this case, we can use the Taylor expansion of $\Psi(y)^*$ around $x$ in a chart $(\phi, U)$ of $M$.
    However, since in the abstract setting we do not have any information about analyticity of $\Psi(y)^*$, we can only
    get the implication in one direction.

    \begin{proposition}
        If $M = \supp \rho$ has a smooth manifold structure, then the equation $J_x(\mathbf{v}) = 0$ implies that
        \begin{align}
            J_x(\mathbf{v}) \partial_{\mu_1} \cdots \partial_{\mu_k} \Psi(x)^* = 0 \text{ for all $\mu_i = 1, \dots \dim M$ and $k \in \mathbb{N}_0$}.
        \end{align}
    \end{proposition}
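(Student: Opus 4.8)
The plan is to reduce the statement to the elementary fact that precomposing the zero operator with any operator into $\H$ again yields the zero operator; the only genuine work is to give a meaning to the derivatives $\partial_{\mu_1}\cdots\partial_{\mu_k}\Psi(x)^*$, and this is exactly where the manifold structure enters.

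First I would make these derivatives precise. By~\eqref{eq:local-correlation-operator} together with the computation of $\Psi(x)^*$ preceding it, the operator $\Psi(y)^*$ is nothing but the restriction to the spin space $S_y M$ of the operator $-y \in L(\H)$. Property~(i) of Definition~\ref{def:smooth-manifold-structure} guarantees that $M$ is a smooth submanifold of $\F \subset L(\H)$, so the map $M \ni y \mapsto -y \in L(\H)$ is smooth. Hence, in a chart $(\phi, U)$ around $x$, one can form the coordinate derivatives of $y \mapsto -y$ at $y = x$ as elements of $L(\H)$, and $\partial_{\mu_1}\cdots\partial_{\mu_k}\Psi(x)^*$ is then defined as their restriction to $S_x M$. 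This is precisely the object appearing in the Taylor expansion of $\Psi(y)^*$ around $x$; in particular, each such derivative is an operator whose range is contained in $\H$.

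Second, I would invoke the hypothesis. By Definition~\ref{def:cfs-current} the CFS current $J_x(\mathbf{v})$ is an operator $\H \to S_x M$ that does not depend on $y$, and the assumption $J_x(\mathbf{v}) = 0$ says that it is the zero map on $\H$. Since the range of $\partial_{\mu_1}\cdots\partial_{\mu_k}\Psi(x)^*$ lies in $\H$, the composition $J_x(\mathbf{v})\,\partial_{\mu_1}\cdots\partial_{\mu_k}\Psi(x)^*$ vanishes for every $k \in \N_0$ and all $\mu_i = 1,\dots,\dim M$. Equivalently, one may use the trivial implication $(1)\Rightarrow(2)$ of Theorem~\ref{thm:probing} to obtain $J_x(\mathbf{v})\,\Psi(y)^* = 0$ for all $y \in M$, and then differentiate this identity in $y$ at $y = x$ in a chart, pulling the $y$-independent operator $J_x(\mathbf{v})$ through the derivative.

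The one point requiring care --- and thus the ``main obstacle'', although it is a mild one --- is the first step: because the domain $S_y M$ of $\Psi(y)^*$ varies with $y$, the derivatives cannot be formed naively as difference quotients of operators on a fixed space, and one has to pass through the ambient smooth family $y \mapsto -y$ supplied by the submanifold property~(i). I would also note, as the surrounding text already anticipates, that only this one implication is available: the converse would require the span of the ranges of all the operators $\partial_{\mu_1}\cdots\partial_{\mu_k}\Psi(x)^*$ to be dense in $\H$, which would follow from (real) analyticity of $y \mapsto \Psi(y)^*$ together with Corollary~\ref{cor:span-of-spin-spaces-is-dense}, but such analyticity is not available in the abstract setting.
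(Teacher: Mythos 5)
Your argument is correct and is essentially the paper's: the paper treats this implication as trivial (the zero operator $J_x(\mathbf{v})$ composed with any operator mapping into $\H$ vanishes, equivalently one differentiates $J_x(\mathbf{v})\Psi(y)^*=0$ in $y$ at $y=x$ in a chart), which is exactly your second step. Your additional care in defining $\partial_{\mu_1}\cdots\partial_{\mu_k}\Psi(x)^*$ via the smooth ambient family $y\mapsto -y$ from property~(i) of Definition~\ref{def:smooth-manifold-structure} is a reasonable elaboration of what the paper leaves implicit, not a different route.
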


    This direction is of course rather trivial.
    Nevertheless, from a physical point of view, this gives a systematic way of deducing tensorial equations from the linearized field equations~\eqref{eq:linearized-field-equations}.
    We also point out that the statement of the proposition is chart independent, because
    all derivatives have to vanish separately (this means that all the
    lower order terms vanish in view of the previous equations).
    As we shall see in Section~\ref{sec:04-iepsilon-regularlized-minkowski-spacetime}, the other direction also holds in the case of the $i\varepsilon$-regularized Minkowski spacetime.

    This completes the consideration in the abstract setting.
    In the next two sections, we first outline the construction of the $i \varepsilon$-regularized Minkowski spacetime
    and of physical relevant perturbations of it.
    Sequentially, we directly compute the CFS current for the continuum limit and show how we recover the Dirac Maxwell equations
    therein.

    \section{The $i\varepsilon$-Regularized Minkowski Spacetime}\label{sec:04-iepsilon-regularlized-minkowski-spacetime}

    In this section, we want to calculate the CFS current explicitly for certain perturbation $\mathbf{v}$.
    Therefore, we have to choose a realization for the Hilbert space $\H$.
    To relate the results to the standard formulation of for example electro dynamics,
    we choose $\H$ as the completion of the space of negative energy solutions of the Dirac equation
    \begin{align}
    (\slashed{k} - m)
        u(k) = 0 && \text{ with } && k_0 \leq 0  \label{eq:dirac-equation}.
    \end{align}
    By choosing this specific realization, the resulting system is assumed to have a background structure which
    corresponds to a Minkowski spacetime (with signature $+ - - -$) with no curvature.
    From the point of view of the Standard Model, this might seem an obvious assumption based on the form of the Dirac
    Equation~\eqref{eq:dirac-equation}.
    However, for causal fermion systems, this is a much more fundamental assumption, because as we see below,
    the induced measure from the Lebesgue measure of Minkowski spacetime is again continuous.
    For the description of a quantum spacetime, we rather expect a discrete measure.
    Hence, this realization can only be seen as an approximate description similar to the Standard Model and general
    relativity.

    The main problem, which arises from the assumption of continuity, is that one runs into UV divergences.
    To circumvent this problem, one introduces an artificial minimal length scale $\varepsilon > 0$ and
    a regularization operator\footnote{
        For this paper $\varepsilon$ is just an artificial parameter.
        However, as shown in~\cite[Section 4.9]{cfs} this parameter appears in the gravitational coupling constant.
    }.
    For this paper, we consider the $i\varepsilon$-regularization.

    \begin{definition}[$i\varepsilon$-regularization]
        \label{def:i-epsilon-regularization}
        Let $\varepsilon > 0$, $\ket{u} \in \H$, then the \textbf{$i\varepsilon$-regularized} wave evaluation operator is defined by
        \begin{align}
            \Psi(x) \ket{u} \defeq \int \frac{\dd^4 k}{(2 \pi)^4} u(k) e^{\frac{1}{2} \varepsilon k_0} e^{-ikx} \qquad \text{ for } x \in \R^{1, 3},
        \end{align}
        where $u(k)$ is the representation of $\ket{u}$ as a solution of~\eqref{eq:dirac-equation}.
    \end{definition}

    Note that in the above definition, $u(k)$ is a distribution supported on the lower mass shell.
    Thus, the zeroth component of the momentum~$k_0 = - \sqrt {m^2 + \vec{k}^2}$ is always less than 0.
    Hence, the exponential factor always suppresses the high energy contributions of the integral.

    For the solutions of the Dirac equation, there is an inner product defined by $u^*(x) v(x)$,
    where $u(x)^* = u^\dagger(x) \gamma^0$ is the usual Dirac adjoint of $u$.
    This is a natural choice of the causal fermion system spin inner product.
    For $x \in \R^{1, 3}$ we set the spin inner product to be
    \begin{align}
        \spinbraket{u}{v}_x &= -\bra{u} \hat{x} \ket{v} = \bra{u} \Psi(x)^* \Psi(x) \ket{v} \nonumber \\
        &\defeq \left( \Psi(x) \ket{u} \right)^\dagger \gamma^0 \left( \Psi(x) \ket{v} \right) && \text{ for all } u, v \in \H \label{eq:minkowksi-spin-inner-product}.
    \end{align}
    The signature of $\spinbraket{\cdot}{\cdot}_x$ implies that $\hat{x} = -\Psi(x)^* \Psi(x)$ has two positive and two negative eigenvalues.
    Thus, $\hat{x} \in \F$ for $n = 2$.
    Furthermore, the mapping $x \mapsto \hat{x}$ induces a measure $\rho^\varepsilon$ given by the push-forward of the
    Lebesgue measure on $\R^{1, 3}$.
    We refer to the causal fermion system $(\mathcal{H}, \mathcal{F}, \rho^\varepsilon)$ as the regularized Minkowski vacuum.
    The operator $\hat{x}$ is regular, i.e.~it has exactly two positive and two negative eigenvalues.
    Thus, the spin spaces at each point $S_{x} M \isoeq \C^4$.

    The adjoint of $i\varepsilon$-regularized wave evaluation operator $\Psi$ with respect to the spin inner product
    defined by~\eqref{eq:minkowksi-spin-inner-product} is given by
    \begin{align}
        \label{eq:def-i-epsilon-regularized-wave-evaluation-adjoint}
        \Psi(x)^* \phi = \left(k \mapsto (\slashed{k} + m) \delta(k^2 - m^2) \Theta(-k_0) e^{\frac{1}{2} \varepsilon k_0} e^{ikx} \phi \right) && \text{ with } && \phi \in \C^4 \isoeq S_x M.
    \end{align}

    \begin{theorem}
        \label{thm:expansion}
        Let $\Psi$ be the $i\varepsilon$-regularized wave evaluation operator given by Definition~\ref{def:i-epsilon-regularization},
        then the following statements are equivalent
        \begin{enumerate}
            \item $\displaystyle J_x (\mathbf{v}) = 0$
            \item $\displaystyle J_x (\mathbf{v}) \partial_{\mu_1} \dots \partial_{\mu_n} \Psi(x)^* = 0$ for all $n \in \N_0$ and $\mu_i \in \{0, 1, 2, 3\}$
        \end{enumerate}
    \end{theorem}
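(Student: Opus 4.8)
The implication $(1)\Rightarrow(2)$ is immediate. For the converse the plan is to combine Theorem~\ref{thm:probing} with the real-analyticity of $y\mapsto\Psi(y)^*$, which is the decisive feature of the $i\varepsilon$-regularization.

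First I would show that for every fixed $\phi\in\C^4$ the map $y\mapsto\Psi(y)^*\phi\in\H$ is real-analytic on $M$. This is read off from the explicit formula~\eqref{eq:def-i-epsilon-regularized-wave-evaluation-adjoint}: on the lower mass shell $k_0=-\sqrt{m^2+\vec k^2}$, so the regularization factor becomes $e^{\frac12\varepsilon k_0}=e^{-\frac12\varepsilon\sqrt{m^2+\vec k^2}}\le e^{-\frac12\varepsilon|\vec k|}$, which decays exponentially in $|\vec k|$ and dominates both the polynomial growth of $(\slashed k+m)$ and any number of momentum factors produced by differentiating $e^{iky}$ in $y$. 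Hence one may differentiate under the integral arbitrarily often, and estimating $\int d^3k\,|\vec k|^{2n}\,e^{-\frac12\varepsilon|\vec k|}$ by a Gamma function yields a bound of the form $\|\partial_{\mu_1}\cdots\partial_{\mu_n}\Psi(y)^*\phi\|_\H\le C\,R^n\,n!\,|\phi|$ with $R$ of order $1/\varepsilon$. This bound shows that the Taylor series of $y\mapsto\Psi(y)^*\phi$ about any point converges in $\H$ on a ball of radius of order $\varepsilon$; equivalently, $y\mapsto\Psi(y)^*$ extends to a holomorphic $L(\C^4,\H)$-valued function on a complex tube of width $\sim\varepsilon$ around $\R^{1,3}$.

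Next, fix $x\in M$ and $\mathbf v\in T_x\F$ and set $g(y)\defeq J_x(\mathbf v)\,\Psi(y)^*$, a $\C^4\to S_xM$ operator depending on $y$. Since $J_x(\mathbf v)$ is a continuous linear map $\H\to S_xM$ and $y\mapsto\Psi(y)^*$ is real-analytic, $g$ is a real-analytic $L(\C^4,S_xM)$-valued function of $y$, whose Taylor coefficients at $y=x$ are precisely the operators $J_x(\mathbf v)\,\partial_{\mu_1}\cdots\partial_{\mu_n}\Psi(x)^*$. By hypothesis~(2) these all vanish, so $g$ vanishes on a neighbourhood of $x$; since $M$ is connected (it is parametrized by $\R^{1,3}$ via $x\mapsto\hat x$), the identity theorem for real-analytic functions gives $g\equiv0$, i.e.\ $J_x(\mathbf v)\,\Psi(y)^*=0$ for every $y\in M$. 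By Theorem~\ref{thm:probing} this is equivalent to $J_x(\mathbf v)=0$, which proves~(1).

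The main obstacle is the analyticity step, specifically (a) justifying the differentiation under the momentum integral, which is exactly where the exponential suppression $e^{\frac12\varepsilon k_0}$ is indispensable --- without a regularization the kernel of the fermionic projector is merely smooth, not analytic, and the argument fails --- and (b) upgrading the mere existence of all $y$-derivatives to genuine $\H$-norm convergence of the Taylor series, which needs the factorial-type estimate above rather than finiteness of each derivative separately. A minor additional point is the continuity of $J_x(\mathbf v)$ used to interchange it with the Taylor expansion; if one prefers not to assume it, one argues as above with $\Psi(y)^*\phi$ for arbitrary $\phi$, concludes that $J_x(\mathbf v)$ vanishes on $\bigcup_{y\in M}S_yM$, and then on the dense subspace $\Span\bigcup_{y\in M}S_yM$ provided by Corollary~\ref{cor:span-of-spin-spaces-is-dense}, exactly as in the proof of Theorem~\ref{thm:probing}.
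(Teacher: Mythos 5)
Your proposal is correct and takes essentially the same route as the paper: both hinge on the fact that the $i\varepsilon$-regularization makes $y \mapsto \Psi(y)^* \phi$ real-analytic (the paper cites the Paley--Wiener theorem, you instead prove the factorial-type derivative bound directly), and then both conclude via vanishing Taylor coefficients at $x$, the identity theorem together with connectedness of $\R^{1,3}$, and finally Theorem~\ref{thm:probing} to get $J_x(\mathbf{v}) = 0$. The replacement of the Paley--Wiener citation by your explicit Gamma-function estimate, and your side remark on the continuity of $J_x(\mathbf{v})$ (which the paper uses implicitly), are only cosmetic differences.
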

    \begin{proof}
        ``$(1) \Rightarrow (2)$'': This implication is trivially satisfied.

        \noindent
        ``$(2) \Rightarrow (1)$'': Let $\phi \in \C^4$, then $g(y) \defeq J_x(\mathbf{v}) \Psi(y)^* \phi$ defines a smooth function on $\R^{1, 3}$.
        By the Paley-Wiener theorem for the Fourier transform in the complex domain
        (see for example~\cite[Section~VI.4]{yosida}), we know that $\Psi(z)^* \phi$ is holomorphic for $z \in B_{\frac{\varepsilon}{2}}(y) \subset \C^4$
        for all $y \in \R^{1, 3}$.
        In particular, $\Psi(y)^* \phi$ is real analytic in $B_{\frac{\varepsilon}{2}}(y)$ for any $y \in \R^{1, 3}$.
        Thus, $g$ is real analytic in $B_{\frac{\varepsilon}{2}}(y)$ and hence the condition $(2)$ implies that $g = 0$ on $B_{\frac{\varepsilon}{2}}(x)$.
        By iteratively applying the same argument and using that~$\R^{1, 3}$ is connected, we get $g = 0$ on all of $\R^{1,3}$.
        By Theorem~\ref{thm:probing} we conclude that $J_x (\mathbf{v}) = 0$.
    \end{proof}

    Using~\eqref{eq:def-kernel-of-the-fermionic-projector} for the kernel of the fermionic projector, we get
    \begin{align}
        \label{eq:minkowski-kernel-of-the-fermionic-projector}
        P(x, y) &= \int \frac{\dd k^4}{(2 \pi)^2} (\slashed{k} + m) \delta(k^2 - m^2) \Theta(-k_0) e^{-ik(x - y)} e^{\varepsilon k_0}.
    \end{align}
    In the following we denote the regularized tangent vector by $\xi^{\mu} \defeq (y^0 - x^0 - i\varepsilon, \vec{y} - \vec{x})$.
    Then, a direct computation of the integral gives
    \begin{align}
        \label{eq:minkowski-vector-scalar-fermionic-projector}
        P(x, y) = (i \slashed{\partial}^{(x)} + m) T^{(0)}_{m^2}(x, y) = \frac{i}{2} T^{(-1)}_{m^2} \slashed{\xi} + m T^{(0)}_{m^2}
    \end{align}
    with
    \begin{align}
        T^{(0)}_{m^2}(x, y) \defeq \frac{m^2}{8 \pi^3} \frac{K_1(z)}{z} && \text{ and } && T^{(-1)}_{m^2}(x, y) \defeq \frac{m^4}{8 \pi^3} \frac{K_2(z)}{z^2}.
    \end{align}
    In the above expressions $K_1$ and $K_2$ are modified Bessel functions of second kind and their argument $z$ is defined by
    \begin{align}
        \label{eq:def-bessel-function-argument}
        z(x, y) \defeq m \sqrt {-\xi_\mu \xi^\mu} = m \sqrt {\norm{\vec{y} - \vec{x}}^2 - (y^0 - x^0 - i\varepsilon)^2}.
    \end{align}
    Before we continue, we also introduce the notation
    \begin{align}
        t \defeq y^0 - x^0 && \text{ and } && r = \norm{\vec{y} - \vec{x}}^2.
    \end{align}

    Having defined $P(x, y)$ for the Minkowski vacuum, we continue with computing the CFS current for this specific
    choice of the kernel of the fermionic projector.
    In addition, we also define the perturbations of $P(x, y)$ for a particle, i.e.~a positive energy solution of the Dirac Equation~\eqref{eq:dirac-equation},
    and a vector potential in this setting.

    \subsection{The CFS Current in Minkowski Spacetime}\label{subsec:cfs-current-in-minkowski-spacetime}

    In order to compute the second variation of the Lagrangian, we have to determine the eigenvalues of the operator product
    $\hat{x} \hat{y}$.
    As stated in Section~\ref{subsec:the-causal-action-principle}, the non-vanishing eigenvalues coincide with the
    eigenvalues of the closed chain $A_{xy}$.

    \begin{proposition}
        \label{prop:eigenvalues-of-the-closed-chain}
        Let $P(x, y)$ be the kernel of the fermionic projector of the Minkowski vacuum, then the eigenvalues of the closed chain
        $A_{xy} = P(x, y) P(y, x)$ are given by
        \begin{align}
            \label{eq:eigenvalues-of-the-closed-chain}
            \lambda^{xy}_{\pm} = b \pm \sqrt {\frac{1}{4} \tr_{S_x M} \left[ \left( A_{xy} - b \right)^2 \right]}
        \end{align}
        each with a two-fold degeneracy and $b \defeq \frac{1}{4}\tr_{S_x M} A_{xy}$.
    \end{proposition}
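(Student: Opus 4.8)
The plan is to reduce everything to an elementary identity in the (complexified) Dirac algebra, using that~\eqref{eq:minkowski-vector-scalar-fermionic-projector} writes the vacuum kernel in the form $P(x,y) = \alpha\,\slashed{\xi} + \beta\,\mathbb{1}$ with the \emph{scalars} $\alpha = \tfrac{i}{2} T^{(-1)}_{m^2}(x,y)$ and $\beta = m\, T^{(0)}_{m^2}(x,y)$. Since $\gamma^0 (\gamma^\mu)^\dagger \gamma^0 = \gamma^\mu$, the Dirac matrices are self-adjoint with respect to the spin inner product~\eqref{eq:minkowksi-spin-inner-product}, so the symmetry $P(y,x) = P(x,y)^*$ immediately yields $P(y,x) = \gamma\,\slashed{\eta} + \delta\,\mathbb{1}$ with $\gamma = \bar\alpha$, $\delta = \bar\beta$ and $\eta^\mu = \overline{\xi^\mu}$ (the same conclusion follows by exchanging $x \leftrightarrow y$ directly in the formula, which turns $z$ into $\bar z$). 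Thus the closed chain $A_{xy} = P(x,y)\,P(y,x)$ is a product of two elements of the form ``scalar $+$ slashed four-vector'', and the whole argument takes place inside $M_4(\C)$.

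Next I would split off the traceless part. Writing $\slashed{\xi}\,\slashed{\eta} = (\xi\cdot\eta)\,\mathbb{1} + W$ with $W \defeq \tfrac12[\slashed{\xi},\slashed{\eta}]$, one has $\tr_{S_x M} W = 0$, so
\[
 b = \tfrac14 \tr_{S_x M} A_{xy} = \alpha\gamma\,(\xi\cdot\eta) + \beta\delta,
 \qquad
 A_{xy} - b\,\mathbb{1} = \alpha\gamma\,W + \alpha\delta\,\slashed{\xi} + \beta\gamma\,\slashed{\eta}.
\]
The key point, which follows purely from the anticommutation relations of the Dirac matrices, is that $W$ anticommutes with both $\slashed{\xi}$ and $\slashed{\eta}$ and that $W^2 = \big((\xi\cdot\eta)^2 - \xi^2\,\eta^2\big)\,\mathbb{1}$. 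Squaring the displayed expression for $A_{xy} - b\,\mathbb{1}$, the two mixed terms containing $W$ become anticommutators and drop out, the term $\slashed{\xi}\,\slashed{\eta} + \slashed{\eta}\,\slashed{\xi} = 2(\xi\cdot\eta)\,\mathbb{1}$ is scalar, and $W^2$, $\slashed{\xi}^2 = \xi^2\,\mathbb{1}$, $\slashed{\eta}^2 = \eta^2\,\mathbb{1}$ are scalar as well. Hence $(A_{xy} - b\,\mathbb{1})^2 = \mu\,\mathbb{1}$ for some $\mu \in \C$, and taking the trace identifies $\mu = \tfrac14 \tr_{S_x M}\!\big[(A_{xy} - b\,\mathbb{1})^2\big]$.

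From here the eigenvalue statement is immediate. When $\mu \neq 0$, the relation $(A_{xy} - b\,\mathbb{1})^2 = \mu\,\mathbb{1}$ forces $A_{xy} - b\,\mathbb{1}$ to be diagonalizable with eigenvalues $\pm\sqrt{\mu}$, and since it is traceless on the four-dimensional space $S_x M$ each eigenvalue occurs with multiplicity two; therefore the eigenvalues of $A_{xy}$ are $\lambda^{xy}_{\pm} = b \pm \sqrt{\mu}$, each twofold degenerate, which is exactly~\eqref{eq:eigenvalues-of-the-closed-chain}. In the non-generic case $\mu = 0$ the same computation gives $A_{xy} = b\,\mathbb{1} + (\text{nilpotent})$, so $b$ is the only eigenvalue, still consistent with the stated formula.

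The only real work is bookkeeping rather than ideas: one must keep the $i\varepsilon$-deformation straight when identifying $\eta$ and the scalars in $P(y,x)$, and remember that $\tr_{S_x M}$ is the ordinary $4\times 4$ matrix trace even though $\spinbraket{\cdot}{\cdot}_x$ is indefinite (so $A_{xy}$ need not be diagonalizable a priori, which is why the $\mu = 0$ case must be addressed separately). The algebraic core — the two anticommutation identities and $W^2 \propto \mathbb{1}$ — is robust and uses nothing about the vacuum beyond the Clifford relations, so it applies to any kernel of the form scalar plus slashed four-vector.
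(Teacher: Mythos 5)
Your proposal is correct and follows essentially the same route as the paper: both decompose $A_{xy}$ into its scalar part $b$ plus a traceless Clifford-algebra remainder and show that this remainder squares to a multiple of the identity, which immediately yields the two doubly degenerate eigenvalues $b \pm \sqrt{\tfrac14 \tr_{S_x M}\left[(A_{xy}-b)^2\right]}$. The only differences are organizational: the paper computes the explicit components $b, a_\mu, c_{\mu\nu}$ (which it reuses later in the continuum-limit analysis), while you argue abstractly via $W=\tfrac12[\slashed{\xi},\slashed{\eta}]$ and additionally spell out the degenerate case $\mu=0$, which the paper leaves implicit.
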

    \begin{proof}
        Using~\eqref{eq:minkowski-vector-scalar-fermionic-projector} to compute the closed chain gives
        \begin{align*}
            A_{xy} = b + a_{\mu} \gamma^{\mu} + c_{\mu\nu} \Sigma^{\mu\nu}
        \end{align*}
        with the real valued functions
        \begin{align*}
            b &= \frac{1}{4} \abs{T^{(-1)}_{m^2}}^2 \xi_\mu \complexconj{\xi^\mu} + m^2 \abs{T^{(0)}_{m^2}}^2, \\
            a_{\mu} &= \frac{i}{2} m \xi_{\mu} T^{(-1)}_{m^2} \complexconj{T^{(0)}_{m^2}} - \frac{i}{2} m \complexconj{\xi_{\mu} T^{(-1)}_{m^2}} T^{(0)}_{m^2}, \\
            c_{\mu\nu} &= \frac{1}{8i} \abs{T^{(-1)}_{m^2}}^2 \left( \xi_\mu \complexconj{\xi_\nu} - \complexconj{\xi_\mu} \xi_\nu \right)
        \end{align*}
        and $\Sigma^{\mu\nu} = \frac{i}{2} \left[ \gamma^\mu, \gamma^\nu \right]$.
        The $\gamma$ and the $\Sigma$ matrices are trace-free, thus
        \begin{align*}
            b &= \frac{1}{4}\tr_{S_x M} A_{xy} & \text{ and } && \left( A_{xy} - b \right)^2 &= a_{\mu} a^{\mu} + 2 c_{\mu\nu} c^{\mu \nu}.
        \end{align*}
        From the above expression, we directly see that the eigenvalues of $A_{xy}$ are given by
        \begin{align*}
            \lambda^{xy}_{\pm} = b \pm \sqrt {a_{\mu} a^{\mu} + 2 c_{\mu\nu} c^{\mu \nu}}  = b \pm \sqrt {\frac{1}{4} \tr_{S_x M} \left[ \left( A_{xy} - b \right)^2 \right]}
        \end{align*}
        each with a two-fold degeneracy.
    \end{proof}

    This proposition is very helpful for computing the variation of the eigenvalues in terms of the variation of the closed chain
    as we have the explicit dependency on $A_{xy}$.

    Before we continue with the computation, we want to highlight that for the proof we represented
    the closed chain $A_{xy}$ — a symmetric operator on the spin space $S_x M$ — as a linear combination of $(\mathbb{1}, \gamma^\mu, \Sigma^{\mu \nu})$.
    However, when analyzing the components $c_{\mu\nu}$ of the bi-linear term we see that
    \begin{align}
        c_{00} &= c_{ij} = 0 &\text{ and } && c_{0i} &= - \frac{1}{4} \abs{T^{(-1)}_{m^2}}^2 \varepsilon \xi_j.
    \end{align}
    Thus, it is more convenient to work with the linear independent matrices $\Gamma^i \defeq \Sigma^{0i}$ and $i \gamma^5 \Gamma^i$
    instead.
    Correspondingly, we introduce $c_i \defeq 2 c_{0i}$.
    All 16 basis elements of the symmetric spin operators are then given by $(\mathbb{1}, i\gamma^5, \gamma^\mu, \gamma^5 \gamma^\mu, \Gamma^i, i \gamma^5 \Gamma^i)$.
    We refer to the components as scalar, pseudo-scalar, vector, pseudo-vector, bi-linear and pseudo-bi-linear respectively.
    In this basis, we have
    \begin{align}
        P(x, y) = \frac{i}{2} T^{(-1)}_{m^2} \xi_{\mu} \gamma^{\mu} + m T^{(0)}_{m^2} && \text{ and } && A_{xy} = b + a_{\mu} \gamma^{\mu} + c_i \Gamma^i.
    \end{align}

    \begin{proposition}
        \label{prop:closed-chain-diagonalizable}
        Let $x, y \in \R^{1, 3}$, then $A_{xy}$ is diagonalizable if
        \begin{align}
            \frac{1}{4} \tr_{S_x M} \left[ \left( A_{xy} - b \right)^2 \right] \neq 0.
        \end{align}
        In this case
        \begin{align}
            \Lambda^{xy}_{\pm} \defeq \frac{1}{2} \pm \frac{A_{xy} - b}{\sqrt {\tr_{S_x M} \left[ \left( A_{xy} - b \right)^2 \right]}}
        \end{align}
        define the projectors onto the eigenspaces for the corresponding eigenvalues $\lambda^{xy}_{\pm}$.
    \end{proposition}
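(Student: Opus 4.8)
The plan is to reduce everything to the structural fact, already extracted in the proof of Proposition~\ref{prop:eigenvalues-of-the-closed-chain}, that $(A_{xy}-b)^2$ is a scalar multiple of the identity on $S_x M$. Writing $N \defeq A_{xy} - b = a_\mu \gamma^\mu + c_i \Gamma^i$, that proof shows $N^2 = a_\mu a^\mu + 2 c_{\mu\nu} c^{\mu\nu}$, the cross term $a_\mu c_i \{\gamma^\mu, \Gamma^i\}$ dropping out because both $a_\mu$ and $c_i$ are built from the single regularized vector $\xi$, so that the contraction with the totally antisymmetric tensor appearing in $\{\gamma^\mu,\Gamma^i\}$ vanishes. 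Since $S_x M \isoeq \C^4$, taking traces gives $N^2 = D\,\mathbb{1}$ with $D \defeq \frac14 \tr_{S_x M} N^2 = a_\mu a^\mu + 2 c_{\mu\nu} c^{\mu\nu} \in \R$, and correspondingly $\lambda^{xy}_\pm = b \pm \sqrt{D}$.

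Next, assuming $D \neq 0$, I would fix a square root $s$ of $D$ (the two choices merely interchange the labels $\pm$; $s$ is imaginary when $D<0$, corresponding to a complex-conjugate pair of eigenvalues, which causes no trouble) and set $\Lambda^{xy}_\pm \defeq \frac12(\mathbb{1} \pm N/s)$; this is precisely the operator in the statement, since $\sqrt{\tr_{S_x M} N^2} = 2s$. The verification is then a short algebraic check using only $N^2 = s^2\,\mathbb{1}$: one has $\Lambda^{xy}_+ + \Lambda^{xy}_- = \mathbb{1}$, $(\Lambda^{xy}_\pm)^2 = \frac14(\mathbb{1} \pm 2N/s + N^2/s^2) = \Lambda^{xy}_\pm$, $\Lambda^{xy}_+ \Lambda^{xy}_- = \frac14(\mathbb{1} - N^2/s^2) = 0$, and $A_{xy}\Lambda^{xy}_\pm = (b\,\mathbb{1}+N)\Lambda^{xy}_\pm = (b \pm s)\Lambda^{xy}_\pm = \lambda^{xy}_\pm \Lambda^{xy}_\pm$, the middle equality again using $N^2/s = s\,\mathbb{1}$. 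I would also record $\tr_{S_x M} \Lambda^{xy}_\pm = 2$, since $N$ is trace-free, matching the two-fold degeneracy of each eigenvalue in Proposition~\ref{prop:eigenvalues-of-the-closed-chain}.

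From these identities the conclusion follows at once: $\{\Lambda^{xy}_+,\Lambda^{xy}_-\}$ is a complete system of orthogonal idempotents, so $A_{xy} = \lambda^{xy}_+ \Lambda^{xy}_+ + \lambda^{xy}_- \Lambda^{xy}_-$ is diagonalizable; and because $\lambda^{xy}_+ \neq \lambda^{xy}_-$ — this is where the hypothesis $D \neq 0$ re-enters — the ranges of $\Lambda^{xy}_\pm$ are exactly the eigenspaces $\ker(A_{xy} - \lambda^{xy}_\pm)$, so $\Lambda^{xy}_\pm$ are the eigenspace projectors, as claimed.

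I do not expect a genuine obstacle: the proposition is essentially a corollary of the computation in the proof of Proposition~\ref{prop:eigenvalues-of-the-closed-chain}. The only points needing a word of care are (a) that $D$ is real but possibly negative, so $s$ and hence $\Lambda^{xy}_\pm$ may be complex, with $(\Lambda^{xy}_\pm)^* = \Lambda^{xy}_\mp$ rather than self-adjoint — the expected behaviour for a symmetric operator on the indefinite spin space with complex eigenvalues — and (b) that the hypothesis $D \neq 0$ is precisely what excludes the degenerate case $N^2 = 0$ with $N \neq 0$, in which $N$ is a nontrivial nilpotent and $A_{xy}$ fails to be diagonalizable.
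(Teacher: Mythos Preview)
Your argument is correct and, in fact, more complete than the paper's own proof, but the route is genuinely different. The paper argues diagonalizability by an appeal to Proposition~\ref{prop:eigenvalues-of-the-closed-chain} (distinct eigenvalues), and then identifies the projectors indirectly via the first-order perturbation formula $\delta\lambda^{xy}_\pm = \tfrac{1}{2}\tr[\Lambda^{xy}_\pm\,\delta A_{xy}]$: differentiating the explicit eigenvalue formula~\eqref{eq:eigenvalues-of-the-closed-chain} and matching against $\tr[(\,\cdot\,)\,\delta A_{xy}]$ pins down $\Lambda^{xy}_\pm$ through the non-degeneracy of the trace pairing. Your approach is instead a direct algebraic verification: you exploit the structural identity $N^2 = D\,\mathbb{1}$ to check by hand that the claimed $\Lambda^{xy}_\pm$ are complementary idempotents with $A_{xy}\Lambda^{xy}_\pm = \lambda^{xy}_\pm\Lambda^{xy}_\pm$, which simultaneously establishes diagonalizability rather than presupposing it. What your method buys is self-containment and a transparent treatment of the degenerate case $N^2=0$, $N\neq 0$ in point~(b); what the paper's variational method buys is a mnemonic that generalizes nicely whenever one has closed-form eigenvalue expressions, and it foreshadows the perturbative computations of $D_\mathbf{v}Q$ used later in Section~\ref{sec:the-continuum-limit-of-minkowski-spacetime}.
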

    \begin{proof}
        The first statement follows directly from the definition of light-like separation and
        the Proposition~\ref{prop:eigenvalues-of-the-closed-chain} for the eigenvalues.

        For the second statement, we use that $\delta \lambda^{xy}_i = \tr\left[ \Lambda^{xy}_i \delta A_{xy} \right]$ and thus
        \begin{align*}
            \tr\left[ \Lambda^{xy}_\pm \delta A_{xy} \right] = 2 \delta \lambda^{xy}_{\pm}
            = \frac{1}{2} \tr \left[ \left( 1 \pm \frac{A_{xy} - b}{\sqrt {\frac{1}{4} \tr_{S_x M} \left[ \left( A_{xy} - b \right)^2 \right] }} \right) \delta A_{xy} \right].
        \end{align*}
    \end{proof}

    Note that the projectors $\Lambda^{xy}_{\pm}$ have rank $2$.
    Hence, they do not coincide with the eigenvalue projectors from Proposition~\ref{prop:abstract-setting-q-and-symmetry-property}.

    In Section~\ref{sec:the-continuum-limit-of-minkowski-spacetime}, we directly give the eigenspace projectors in the continuum
    limit.
    Before analyzing the continuum limit, we want to determine the physically interesting perturbations of the Minkowski vacuum.

    \subsection{Dirac Current}\label{subsec:dirac-current}

    As a first example for a CFS current, we want to analyze the case of perturbing the vacuum minimizer by
    additional particles with positive frequency.
    This is archieved by a tangent vector $\mathbf{u_p}$ defined by
    \begin{align}
        D_{\mathbf{u_p}} P(x, y) = \frac{1}{2 \pi} \sum_{i = 1}^{n_p} \phi_i(x) \diracad{\phi_i(y)},
    \end{align}
    where $n_p$ is the particle number and $\phi_i$ are solutions to the Dirac equation with positive energy.
    Similarly, one can introduce anti-particles by creating holes in the Dirac sea.
    This corresponds to the tangent vector $\mathbf{u_a}$ defined by
    \begin{align}
        D_{\mathbf{u_a}} P(x, y) = -\frac{1}{2 \pi} \sum_{i = 1}^{n_a} \psi_i(x) \diracad{\psi_i(y)},
    \end{align}
    where again $n_a$ is the number of anti-particles.
    $\psi_i$ are negative energy solutions of the Dirac equation.

    \subsection{Maxwell Current}\label{subsec:maxwell-current}

    In order to derive the corresponding perturbation related to a vector potential, we look at a collective perturbation
    of the kernel of the fermionic projector.
    Instead of the vacuum Dirac equation, we analyze the Dirac equation with a potential $A(x)$
    \begin{align}
        \label{eq:dirac-maxwell-equation}
        (i \slashed{\partial} - \slashed{A}(x) - m) u(x) = 0.
    \end{align}
    At this point, we do not consider $A$ as an electromagentic potential but rather as a small perturbation.
    Thus, we take the ansatz that the perturbed fermionic projector is a solution to the perturbed Dirac Equation~\eqref{eq:dirac-maxwell-equation}.
    Hence,
    \begin{align}
    (i \slashed{\partial}^{(x)} - \slashed{A}(x) - m)
        (P(x, y) + D_{\mathbf{v}} P(x, y)) = 0.
    \end{align}
    We solve this equation iteratively in orders of $A$:
    \begin{align}
        P^{(0)}(x, y) &\defeq P(x, y), \\
        (i \slashed{\partial}^{(x)} - m) P^{(n)}(x, y) &= \slashed{A}(x) P^{(n - 1)}(x, y),
    \end{align}
    then the full perturbation is given by
    \begin{align}
        D_{\mathbf{v}} P(x, y) &= \sum_{n = 1}^{\infty} P^{(n)}(x, y).
    \end{align}

    For this paper, we only consider the first order in $A$, i.e.~$D_{\mathbf{v}} P(x, y) = P^{(1)}(x, y) + \mathcal{O}(A^2)$.
    As shown in Appendix~\ref{sec:the-regularized-light-cone-expansion} $P^{(1)}(x, y)$ can be written as
    \begin{align}
        \label{eq:maxwel-pertubration-derivation}
        P^{(1)}(x, y) = (i \slashed{\partial}^{(x)} + m) \gamma^\mu (-i \slashed{\partial}^{(y)} + m) \Delta T_{m^2}[A_\mu](x, y),
    \end{align}
    where $\Delta T$ is a solution of
    \begin{align}
    (\dAlembertian + m^2)
        \Delta T_{m^2}[A](x, y) = A(x) T^{(0)}_{m^2}(x, y) \label{eq:klein-gordon-equation-for-delta-t}.
    \end{align}

    \begin{proposition}
        \label{prop:electro-magnetic-solution}
        The formal series
        \begin{align}
            \Delta T_{m^2}[A](x, y) \defeq \sum_{n = 0}^{\infty} \frac{T_{m^2}^{(n + 1)}(x, y)}{n!} \int_0^1 (\tau - \tau^2)^n (\dAlembertian^n A)(x + \tau \xi) \dd \tau
        \end{align}
        with $\xi^\mu = (y^0 - x^0 - i\varepsilon, \vec{y} - \vec{x})$ and $T_a^{(n)} \defeq \pdv[n]{a} T_a$ solves~\eqref{eq:klein-gordon-equation-for-delta-t}.
    \end{proposition}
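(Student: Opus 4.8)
The plan is to verify the stated Klein--Gordon equation by substituting the series into $(\dAlembertian+m^2)$ --- with $\dAlembertian=\dAlembertian_x$ acting on the first argument --- applying the Leibniz rule to each summand, and exhibiting the result as a telescoping sum. The only external input are two properties of the regularized distributions $T^{(n)}_{m^2}$ supplied by Appendix~\ref{sec:the-regularized-light-cone-expansion}: first, that $T^{(0)}_{m^2}$ solves the massive Klein--Gordon equation \emph{without} a distributional source,
\begin{align*}
    (\dAlembertian_x + m^2)\, T^{(0)}_{m^2}(x,y) = 0,
\end{align*}
which holds because the $i\varepsilon$-regularization displaces the light cone into the complex domain (the argument $z$ of the Bessel functions stays away from $0$), so that $T^{(0)}_{m^2}$ is genuinely smooth; and second, the gradient identity
\begin{align*}
    \partial_{x^\mu} T^{(n+1)}_{m^2}(x,y) = \tfrac12\, \xi_\mu\, T^{(n)}_{m^2}(x,y) \qquad (n\ge 0),
\end{align*}
which for $n=0$ is a direct Bessel-function computation and then propagates to all $n$ since $T^{(n)}_{m^2}=\partial_{m^2}^n T^{(0)}_{m^2}$ and $\partial_{m^2}$ commutes with $\partial_{x^\mu}$. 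As $\dAlembertian_x$ commutes with $\partial_{m^2}$ while $[\dAlembertian_x+m^2,\partial_{m^2}^n]=-n\,\partial_{m^2}^{n-1}$, the first identity upgrades to $(\dAlembertian_x+m^2)T^{(n)}_{m^2}=-n\,T^{(n-1)}_{m^2}$ for $n\ge 1$.

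Write $I_n(x,y):=\int_0^1(\tau-\tau^2)^n(\dAlembertian^n A)(x+\tau\xi)\,d\tau$, so that $\Delta T_{m^2}[A]=\sum_{n\ge0}\frac1{n!}\,T^{(n+1)}_{m^2}\,I_n$. Viewed as a function of $x$ with $y$ held fixed, the point $x+\tau\xi$ has Jacobian $(1-\tau)\delta^\nu_\mu$ (the constant imaginary shift in the time component does not matter here), so
\begin{align*}
    \partial_{x^\mu}I_n &=\int_0^1(\tau-\tau^2)^n(1-\tau)\,(\partial_\mu\dAlembertian^n A)(x+\tau\xi)\,d\tau,\\
    \dAlembertian_x I_n &=\int_0^1(\tau-\tau^2)^n(1-\tau)^2\,(\dAlembertian^{n+1}A)(x+\tau\xi)\,d\tau.
\end{align*}
Applying $(\dAlembertian_x+m^2)$ to the $n$-th summand by the Leibniz rule (assigning the mass term to $T^{(n+1)}_{m^2}$) and inserting the identities of the previous paragraph produces exactly three pieces: $-(n+1)\,T^{(n)}_{m^2}\,I_n$; a cross term $T^{(n)}_{m^2}\int_0^1(\tau-\tau^2)^n(1-\tau)\,\xi^\mu(\partial_\mu\dAlembertian^n A)(x+\tau\xi)\,d\tau$; and $T^{(n+1)}_{m^2}\,\dAlembertian_x I_n$. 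In the cross term one uses $\xi^\mu(\partial_\mu\dAlembertian^n A)(x+\tau\xi)=\frac{d}{d\tau}\big[(\dAlembertian^n A)(x+\tau\xi)\big]$ and integrates by parts in $\tau$.

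After the integration by parts, collect all contributions proportional to $T^{(n)}_{m^2}$ together with the function $\dAlembertian^n A$; for $n\ge1$ these come from the $n$-th summand and from the $\dAlembertian_x I_{n-1}$ piece of the $(n-1)$-st summand, and they add up to
\begin{align*}
    \frac{T^{(n)}_{m^2}}{n!}\int_0^1\Big[-(n+1)(\tau-\tau^2)^n-\frac{d}{d\tau}\big((\tau-\tau^2)^n(1-\tau)\big)+n\,(\tau-\tau^2)^{n-1}(1-\tau)^2\Big](\dAlembertian^n A)(x+\tau\xi)\,d\tau.
\end{align*}
The bracket vanishes identically for every $n\ge1$ by the elementary identity
\begin{align*}
    (n+1)(\tau-\tau^2)^n+\frac{d}{d\tau}\big((\tau-\tau^2)^n(1-\tau)\big)=n\,(\tau-\tau^2)^{n-1}(1-\tau)^2,
\end{align*}
which one checks by writing $\tau-\tau^2=\tau(1-\tau)$ and factoring out $(\tau-\tau^2)^{n-1}(1-\tau)$. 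Hence the whole sum telescopes, and the only contribution that survives is the boundary term produced by the integration by parts in the $n=0$ summand, evaluated at the endpoint $\tau=0$, where $x+\tau\xi=x$; this equals $A(x)\,T^{(0)}_{m^2}(x,y)$, the asserted right-hand side.

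The computation itself is mechanical, so the real difficulty is bookkeeping: one must keep apart the two ways $x$ enters each summand --- through $\xi=y-x$ inside $T^{(n+1)}_{m^2}$, and through the base point inside $(\dAlembertian^n A)(x+\tau\xi)$ --- since it is precisely the chain-rule weights $(1-\tau)$ and $(1-\tau)^2$ produced by the latter that must match the weights $(\tau-\tau^2)^n$ built into $I_n$ for the polynomial identity, and hence the telescoping, to go through. A secondary point is that the statement is an identity of \emph{formal} series; if one wants honest convergence --- and the legitimacy of interchanging $(\dAlembertian_x+m^2)$ with the summation and differentiation with the $\tau$-integration --- it is enough to assume, say, that $A$ is real analytic with the iterated powers $\dAlembertian^n A$ suitably bounded.
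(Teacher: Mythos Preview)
Your proof is correct and follows essentially the same route as the paper's own argument in Appendix~\ref{sec:the-regularized-light-cone-expansion}: Leibniz rule on each summand, the two identities for $T^{(n)}_{m^2}$ from Lemma~\ref{lem:tn-properties}, the chain-rule weights $(1-\tau)$ and $(1-\tau)^2$ for $A(x+\tau\xi)$, the substitution $\xi^\mu\partial_\mu=\tfrac{d}{d\tau}$, integration by parts in $\tau$, and telescoping. The only cosmetic difference is that you compress the cancellation into the single polynomial identity $(n+1)(\tau-\tau^2)^n+\tfrac{d}{d\tau}\big((\tau-\tau^2)^n(1-\tau)\big)=n(\tau-\tau^2)^{n-1}(1-\tau)^2$, whereas the paper writes out the four resulting integrals and matches them across consecutive $n$; your commutator derivation of $(\dAlembertian_x+m^2)T^{(n)}_{m^2}=-n\,T^{(n-1)}_{m^2}$ is likewise equivalent to the paper's Fourier argument.
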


    The proof of this proposition is given in Appendix~\ref{sec:the-regularized-light-cone-expansion}.
    With this result, we compute $D_{\mathbf{v}} P(x, y)$ according to~\eqref{eq:maxwel-pertubration-derivation}.
    In the continuum limit the relevant term, which contributes to the CFS current is given by
    \begin{align}
        D_{\mathbf{v}} P(x, y) = - \frac{1}{2} \slashed{\xi} \xi^{\mu} \int_0^1 \partial^\nu F_{\nu\mu}(x + \tau \xi) \dd \tau T^{(0)}_{m^2}(x, y),
    \end{align}
    where $F_{\mu\nu} (x) \defeq \partial_\mu A_\nu (x) - \partial_\nu A_\mu(x)$ is the usual field tensor associated to
    the potential $A$.
    Thus, $\partial^\nu F_{\nu\mu}(x)$ is the usual current corresponding to $A$.

    \section{The Continuum Limit of Minkowski Spacetime}\label{sec:the-continuum-limit-of-minkowski-spacetime}

    In this section, we analyze the behaviour around $\varepsilon \approx 0$.
    We refer to the limit $\varepsilon \to 0$ as continuum limit.
    However, the kernel of the fermionic projector is not regular on the light-cone for $\varepsilon = 0$.
    In~\cite{cfs} this limit is analyzed as a distribution.
    In this paper, we present an alternative approach.
    The advantages of this approach are outlined in Section~\ref{sec:conclusion-and-outlook}.

    \subsection{Analysis of the UV Divergence}\label{subsec:analysis-of-the-uv-divergence}

    In the following we consider a small $\varepsilon > 0$.
    Then we start by rewriting the kernel of the fermionic projectors given by~\eqref{eq:minkowski-vector-scalar-fermionic-projector} as
    \begin{align}
        P(x, y) = \frac{i}{2} T^{(-1)}_{m^2} \slashed{\xi} + m T^{(0)}_{m^2} = T^{(-1)}_{m^2} \left( \frac{i}{2} \slashed{\xi} + m \frac{T^{(0)}_{m^2}}{T^{(-1)}_{m^2}} \right).
    \end{align}
    The factor $T^{(-1)}_{m^2}$ diverges on the light-cone for $\varepsilon \to 0$ where as the factor $\frac{T^{(0)}_{m^2}}{T^{(-1)}_{m^2}}$ is regular.
    Thus, the factor $T^{(-1)}_{m^2}$ describes the singular structure of the fermionic projectors in the continuum limit.
    This property is displayed in Figure~\ref{fig:absolut-value-of-t-minus-one-squared} where one sees $\abs{T^{(-1)}_{m^2}}$
    has its largest value at the origin and decays rapidly away from the origin.
    In particular, one also sees slower decay along the light-cone.

    \begin{figure}
        \centering

        \begin{tikzpicture}
            \begin{axis}
                [
                view={0}{90},
                xlabel=$r$,
                ylabel=$t$,
                xtick={-3, 0, 3},
                xticklabels={$-3 \varepsilon$, $0$, $3 \varepsilon$},
                ytick={-3, 0, 3},
                yticklabels={$-3 \varepsilon$, $0$, $3 \varepsilon$},
                title={Plot of $\abs{T^{(-1)}_{m^2}(t, r)}^2$},
                colorbar,
                colorbar style={
                    ytick={0, 6000},
                    yticklabels={0, $\frac{1}{4 \pi^6 \varepsilon^8}$}
                }
                ]

                \addplot3[surf, mesh/rows=50, shader=interp] table {absolut-value-of-alpha-squared.dat};
                \addplot3[gray, thick] table {absolut-value-of-alpha-squared-contour-2.dat};
                \addplot3[gray, thick] table {absolut-value-of-alpha-squared-contour-1.dat};
                \addplot3[gray, thick] table {absolut-value-of-alpha-squared-contour-0.dat};
            \end{axis}
        \end{tikzpicture}

        \caption{
            The plot shows $\abs{T^{(-1)}_{m^2}(t, r)}^2$ for $\varepsilon > 0$ and its contour lines for at $\abs{T^{(-1)}_{m^2}(0, 0)}^2 \cdot 10^{-1}$, $\abs{T^{(-1)}_{m^2}(0, 0)}^2  \cdot 10^{-2}$ and $\abs{T^{(-1)}_{m^2}(0, 0)}^2 \cdot 10^{-3}$.
            One clearly sees a rapid decay of $\abs{T^{(-1)}_{m^2}}^2$ away from the origin with slower decay on the light-cone.
        }
        \label{fig:absolut-value-of-t-minus-one-squared}
    \end{figure}
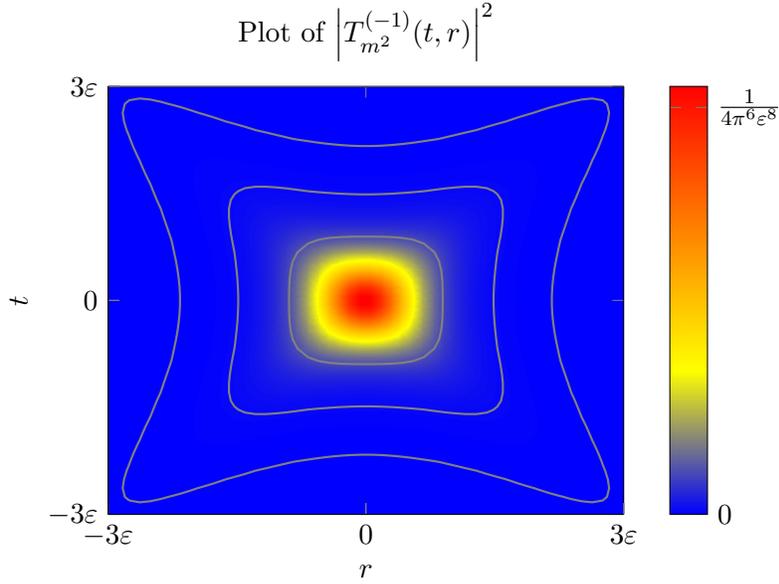

    Thus, for the following, we consider $x_\mu - y_\mu$ of order $\varepsilon$ with $t^2 - r^2$ of order $\varepsilon^2$.
    Then
    \begin{align}
        \frac{T^{(0)}_{m^2}}{T^{(-1)}_{m^2}} = \frac{1}{2} \left( r^2 - t^2 + \varepsilon^2 + 2 i \varepsilon t \right) + \mathcal{O}(\varepsilon^3).
    \end{align}
    In addition, the components of the closed chain are given by
    \begin{align}
        b &= \frac{\abs{T^{(-1)}_{m^2}}^2}{4} (t^2 - r^2 + \varepsilon^2 + \mathcal{O}(\varepsilon^4)), \\
        a_{\mu} &= \frac{\abs{T^{(-1)}_{m^2}}^2}{4} \left( 2 m \varepsilon t (y_\mu - x_\mu) + \mathcal{O}(\varepsilon^4) \right) = \abs{T^{(-1)}_{m^2}}^2 \mathcal{O}(\varepsilon^3), \\
        c_i &= - \frac{\abs{T^{(-1)}_{m^2}}^2}{2} \varepsilon (y_i - x_i).
    \end{align}
    Therefore, the eigenvalues are given by
    \begin{align}
        \label{eq:eigenvalues-in-the-continuum}
        \lambda_{\pm} = \frac{\abs{T^{(-1)}_{m^2}}^2}{4} \left( t^2 - r^2 + \varepsilon^2 \pm 2 i \varepsilon r + \mathcal{O}(\varepsilon^3) \right).
    \end{align}
    For $r > 0$ the eigenvalues form a complex conjugated pair, hence
    $\abs{\lambda_+} = \abs{\lambda_-}$ and by Proposition~\ref{prop:closed-chain-diagonalizable} the closed chain is diagonalizable.

    In the definition of the CFS current, we integrate over all $t$ and $r$.
    Since the contribution for $r = 0$ has measure zero, we restrict to the case for $r > 0$ and drop terms of order $\varepsilon^3$
    from now on.
    Considering the components and the eigenvalues of the closed chain as above, we get the following projection operators
    of the corresponding eigenspaces
    \begin{align}
        \Lambda_{\pm} = \frac{1}{2} \pm i \frac{(y_i - x_i)}{2r} \Gamma^i
    \end{align}
    which commute with the chiral projectors $\chi_{L/R} = \frac{1}{2} \mp \frac{1}{2} \gamma^5$.
    Thus, we get the spectral operators $\Lambda_i = \chi_{L/R} \Lambda_\pm$.
    In this notation $i = 1, 2, 3, 4$ correspond to the projectors $(L, +), (L, -), (R, +), (R, -)$ respectively.

    Since the eigenvalues of the closed chain form complex conjugated pairs, all eigenvalues have the same absolute
    value.
    Hence, $\abs{\lambda^{xy}_i} - \abs{\lambda^{xy}_j} = 0$ for all $i$ and $j$.
    Thus, in the continuum limit, $Q(x, y) = 0$.
    For the Lagrange multiplier $\mathfrak{r} = 0$, the $i\varepsilon$-regularized Minkowski vacuum solves the
    restricted \ac{el}~\eqref{eq:el-equation}.

    Sequentially, in the continuum limit the CFS current reduces to
    \begin{align}
        \label{eq:cfs-current-in-the-continuum-limit}
        J_x(\mathbf{v}) &= \int D_\mathbf{v} Q(x, y) \Psi(y) \dd^4 y \nonumber \\
        &= \frac{1}{4} \int \sum_{i, j = 1}^{4} D_\mathbf{v} \left( \abs{\lambda^{xy}_i} - \abs{\lambda^{xy}_j} \right) \frac{\complexconj{\lambda^{xy}_i}}{\abs{\lambda^{xy}_i}} \Lambda^{xy}_i P(x, y) \Psi(y) \dd^4 y \nonumber \\
        &= \frac{1}{4} \int \sum_{i, j = 1}^{4} \Re \tr_{S_x M} \left[ \Omega^{xy}_{ij} D_\mathbf{v} A_{xy} \right] \frac{\complexconj{\lambda^{xy}_i}}{\abs{\lambda^{xy}_i}} \Lambda^{xy}_i P(x, y) \Psi(y) \dd^4 y.
    \end{align}
    with $\Omega^{xy}_{ij} \defeq \frac{\complexconj{\lambda^{xy}_i}}{\abs{\lambda^{xy}_i}} \Lambda^{xy}_{i} - \frac{\complexconj{\lambda^{xy}_j}}{\abs{\lambda^{xy}_j}} \Lambda^{xy}_{j}$.

    The remaining task is to determine the term $\Re \tr_{S_x M} \left[ \Omega^{xy}_{ij} D_\mathbf{v} A_{xy} \right]$ for
    different perturbation $\mathbf{v}$.
    In the following, we consider the perturbations described in Section~\ref{subsec:dirac-current} and Section~\ref{subsec:maxwell-current}.

    \subsection{Axial Current for one Fermion Type}\label{subsec:axial-current-for-one-fermion-type}

    By the structure of the Lagrangian one sees that for a single fermion type ($n = 2$) we only get a non-vanishing causal fermion system
    current if the eigenvalues are perturbed differently.
    This is the case of axial currents.

    We start by analyzing the Dirac current of a single particle.
    As stated in the previous section, the corresponding perturbation of the kernel of the fermionic projector is given by
    \begin{align*}
        D_{\mathbf{u}} P(x, y) = \frac{1}{2 \pi} \phi(x) \diracad{\phi(y)}.
    \end{align*}
    By the assumptions made in Section~\ref{subsec:analysis-of-the-uv-divergence}, we analyze the regime at the order of $\varepsilon$.
    Physically, the regime corresponds to the Planck scale.
    Thus, on this scale we assume that $\phi$ is almost constant.
    Hence,
    \begin{align}
        \label{eq:localization-approximation}
        D_{\mathbf{u}} P(x, y) \approx \frac{1}{2 \pi} \phi\left( \frac{x + y}{2}\right) \diracad{\phi\left( \frac{x + y}{2}\right) }.
    \end{align}

    For this perturbation, we get the following CFS current (see Appendix~\ref{sec:computation-of-the-cfs-current} for more
    details on the calculation):
    \begin{align}
        &\text{0th Order:} & J_x(\mathbf{u}) \Psi(x)^* &= i C^{(D, 0)}_{\text{Pseudo-Bilinear}} ~ \diracad{\phi(x)} \gamma^5 \gamma_k \phi(x) ~ \gamma^5 \Gamma^k \\
        &\text{1st Order:} & J_x(\mathbf{u}) \partial_0 \Psi(x)^* &= i C^{(D, t)}_{\text{Pseudo-Scalar}} ~ \diracad{\phi(x)} \gamma^5 \gamma_0 \phi(x) ~ \gamma^5 \nonumber \\
        &&&~~~ + i C^{(D, t)}_{\text{Pseudo-Bilinear}}  ~ \diracad{\phi(x)} \gamma^5 \gamma_k \phi(x) ~ \gamma^5 \Gamma^k,\\
        & & J_x(\mathbf{u}) \partial_k \Psi(x)^* &= i C^{(D, r)}_{\text{Pseudo-Scalar}} ~ \diracad{\phi(x)} \gamma^5 \gamma_k \phi(x) ~ \gamma^5 \nonumber \\
        &&&~~~ + i C^{(D, r)}_{\text{Pseudo-Bilinear}}  ~ \diracad{\phi(x)} \gamma^5 \gamma_0 \phi(x) ~ \gamma^5 \Gamma_k.
    \end{align}
    The constants $C^{(D, 0)}_{\text{Pseudo-Bilinear}}$, $C^{(D, t)}_{\text{Pseudo-Bilinear}}$ and $C^{(D, r)}_{\text{Pseudo-Scalar}}$
    are all real and are explicitly given in Appendix~\ref{sec:computation-of-the-cfs-current}.
    However, the numerical values of these constants do not matter to derive the classical equations.
    The relevant aspect of the above expression is that axial Dirac current $\diracad{\phi(x)} \gamma^5 \gamma_{\mu} \phi(x)$
    appears in coefficients of the operator-valued CFS current.

    Taking a closer look at the expressions of the constants, one clearly sees that they are all different.
    Thus, the axial Dirac current does not appear in a Lorentz invariant way.
    The breaking of Lorentz invariance on the level of the CFS current is not surprising, because we regularized the
    vacuum along a designated time direction.
    However, we also see that considering the current generated by a Dirac particle, the only possible way of having $J_x(u) = 0$
    is that $\diracad{\phi(x)} \gamma^5 \gamma_{\mu} \phi(x) = 0$.
    In this particular case, the equations are trivially Lorentz invariant.
    In order to allow for non-trivial particle currents, we need to additionally consider an axial Maxwell potential, which
    compensates for the particle current.
    As we see below, the constraining equations which ensure that both contributions compensate each other, are then Lorentz invariant.

    Since, only the axial particle current occurs in the CFS current, we can compensate the current generated by a
    Dirac particle by a collective perturbation of the Dirac sea by a
    pseudo vector potential $A_\mu(x) \gamma^5 \gamma^\mu$.
    As mentioned in Section~\ref{subsec:maxwell-current}, the contributing part to the fermionic projector is given by
    \begin{align}
        D_{\mathbf{v}} P(x, y) &= \frac{1}{2} \gamma^5 \slashed{\xi} \xi^{\mu} \int_0^1 (\tau - \tau^2) (\partial^\nu F_{\mu \nu})(x + \tau \xi) \dd \tau ~ T^{(0)}_{m^2}(x, y) \nonumber \\
        &~~~+\gamma^5 \gamma^{\mu} \int_0^1 (1 - \tau)^2 (\partial^\nu F_{\mu \nu})(x + \tau \xi) \dd \tau ~ T^{(1)}_{m^2}(x, y) \\
        &\approx \frac{1}{12} \gamma^5 \slashed{\xi} \xi^{\mu} \partial^\nu F_{\mu \nu}\left( \frac{x + y}{2}  \right) T^{(0)}_{m^2}(x, y)
        + \frac{1}{3} \gamma^5 \gamma^{\mu} \partial^\nu F_{\mu \nu}\left( \frac{x + y}{2}  \right) T^{(1)}_{m^2}(x, y)
    \end{align}
    where $F_{\mu \nu}(x) \defeq \partial_\mu A_\nu(x) - \partial_\nu A_\mu(x)$ is the usual field strength tensor.
    Similar to the consideration above, we assumed that $A_\nu(x)$ and its derivates do not vary on Planck scales.
    The corresponding CFS current is then given by
    \begin{align}
        &\text{0th Order:} & J_x(\mathbf{u}) \Psi(x)^* &= i C^{(M, 0)}_{\text{Pseudo-Bilinear}} ~ j_k ~ \gamma^5 \Gamma^k \\
        &\text{1st Order:} & J_x(\mathbf{u}) \partial_0 \Psi(x)^* &= i C^{(M, t)}_{\text{Pseudo-Scalar}} ~ j_0 ~ \gamma^5 \nonumber \\
        &&&~~~ + i C^{(M, t)}_{\text{Pseudo-Bilinear}}  ~ j_k ~ \gamma^5 \Gamma^k,\\
        & & J_x(\mathbf{u}) \partial_k \Psi(x)^* &= i C^{(M, r)}_{\text{Pseudo-Scalar}} ~ j_k ~ \gamma^5 \nonumber \\
        &&&~~~ + i C^{(M, r)}_{\text{Pseudo-Bilinear}}  ~ j_0 ~ \gamma^5 \Gamma_k
    \end{align}
    where $j_\mu = \partial^\nu F_{\nu \mu}$.
    The explicit expressions for the constants appearing in the above equations are also given in Appendix~\ref{sec:computation-of-the-cfs-current}.
    In addition, it is shown that the ratios of the respective Dirac and Maxwell
    constants are all equal, real and in the approximation do not depend on $\varepsilon$, i.e.
    \begin{align}
        \label{eq:equality-of-the-current-constant-ratios}
        \frac{1}{\alpha} &\defeq \frac{C^{(M, 0)}_{\text{Pseudo-Bilinear}}}{C^{(D, 0)}_{\text{Pseudo-Bilinear}}} = \frac{C^{(M, t)}_{\text{Pseudo-Scalar}}}{C^{(D, t)}_{\text{Pseudo-Scalar}}}
        = \frac{C^{(M, t)}_{\text{Pseudo-Bilinear}}}{C^{(D, t)}_{\text{Pseudo-Bilinear}}} \nonumber \\
        &= \frac{C^{(M, r)}_{\text{Pseudo-Scalar}}}{C^{(D, r)}_{\text{Pseudo-Scalar}}} = \frac{C^{(M, r)}_{\text{Pseudo-Bilinear}}}{C^{(D, r)}_{\text{Pseudo-Bilinear}}}
    \end{align}
    This allows us to write down a combined perturbation $\mathbf{u} + \mathbf{v}$, which solves the linearized field~\eqref{eq:linearized-field-equations}.
    This is the case, if and only if
    \begin{align}
        j^{(a)}_\mu(x) = \alpha \diracad{\phi(x)} \gamma^5 \gamma_{\mu} \phi(x).
    \end{align}
    Note, that due to~\eqref{eq:equality-of-the-current-constant-ratios} the constraining equation is again Lorentz invariant
    and the coupling constant $\alpha$ is real-valued.

    Therefore, for a single fermion type we recover the classical equations for an axial vector potential in a Lorentz invariant.
    For one fermion type, causal fermion systems do not allow for vectorial interaction.
    This is due to the fact that a vectorial perturbation preserves the twofold degeneracy of the eigenvalues of the closed chain.
    Thus, such a perturbation does not contribute in the Lagrangian~\eqref{eq:lagrangian}.
    In the following section, we show how to recover the equation of motion for a vector perturbation by considering two fermion
    types, where one type is kept unperturbed.

    \subsection{Electromagnetic Interaction}\label{subsec:electromagnetic-interaction}

    In this section, we consider two types, $e$ and $\nu$, of fermions.
    Thus, we form the tuple $(e, \nu)$ and consider the causal fermion system of spin-dimension $n = 4$.
    Then the continuum limit vacuum kernel of the fermionic projector is given by
    \begin{align}
        P(x, y) = \frac{i}{2} T^{(-1)}_{m^2} \xi_{\mu} (\gamma^{\mu} \oplus \gamma^{\mu}) = \frac{i}{2} T^{(-1)}_{m^2} \xi_{\mu} \begin{pmatrix}
                                                                                                                                     \gamma^{\mu} & 0            \\
                                                                                                                                     0            & \gamma^{\mu}
        \end{pmatrix}
    \end{align}
    In this setup, we get a non-vanishing scalar and bilinear contribution to the CFS current from a collective vector
    perturbation if we only couple it to one of the two fermion types.
    For this, we choose the label $e$ as the fermion type, which Dirac sea is perturbed.

    In the language of the Standard Model, the vector perturbation corresponds to the electromagnetic field, i.e.~to
    photons.
    The fermion type coupling to this perturbation is a charged fermion, which in our setup is the electron, whereas the
    uncharged fermion type is the neutrino.

    By only considering a vector perturbation, our model only allows for a creation of $e$-particle, because there is no
    contribution canceling the CFS current for particles of type $\nu$.
    This requires additional, axial perturbations analogously to Section~\ref{subsec:axial-current-for-one-fermion-type}.

    First, the CFS current induced by an $e$-particle $\phi$ corresponding to a perturbation $\mathbf{u_p}$ is given by
    \begin{align}
        J_x(\mathbf{u_p}) \Psi(x)^* &= 2 C^{(D, 0)}_{\text{Scalar}} \diracad{\phi(x)} \gamma_0 \phi(x) \nonumber \\
        &~~~+ \frac{5i}{2} C^{(D, 0)}_{\text{Pseudo-Bilinear}} \diracad{\phi(x)} \gamma^5 \gamma_k \phi(x) \gamma^5 \Gamma^k.
    \end{align}
    Note, that the second term is, up to the factor $\frac{5}{2}$, the same as the contribution from the axial current.
    This also holds for the higher moments.
    Thus, we do not repeat these contributions here.

    For the first moment of the CFS current, we have
    \begin{align}
        J_x(\mathbf{u_p}) \partial_0 \Psi(x)^* &\containsContribution i C^{(D, t)}_{\text{Scalar}} \diracad{\phi(x)} \gamma_0 \phi(x) + i C^{(D, t)}_{\text{Bilinear}} \diracad{\phi(x)} \gamma^k \phi(x) \Gamma_k,\\
        J_x(\mathbf{u_p}) \partial_k \Psi(x)^* &\containsContribution i C^{(D, r)}_{\text{Scalar}} \diracad{\phi(x)} \gamma_k \phi(x) + i C^{(D, r)}_{\text{Bilinear}} \diracad{\phi(x)} \gamma^0 \phi(x) \Gamma_k.
    \end{align}

    Similarly, the contribution from an electro magnetic perturbation $\mathbf{u_{A}}$ is given by
    \begin{align}
        J_x(\mathbf{u_A}) \Psi(x)^* = \tilde{C}^{(M, 0)}_\text{Scalar} \partial^{\nu} F_{\nu 0}(x)
    \end{align}
    and for the first moment, we have
    \begin{align}
        J_x(\mathbf{u_A}) \partial_0 \Psi(x)^* &= i \tilde{C}^{(M, t)}_\text{Scalar} \partial^{\nu} F_{\nu 0}(x) + i \tilde{C}^{(M, t)}_\text{Bilinear} \partial^{\nu} F_{\nu k}(x) \Gamma^k,\\
        J_x(\mathbf{u_A}) \partial_k \Psi(x)^* &= i \tilde{C}^{(M, r)}_\text{Scalar} \partial^{\nu} F_{\nu k}(x) + i \tilde{C}^{(M, r)}_\text{Bilinear} \partial^{\nu} F_{\nu 0}(x) \Gamma_k.
    \end{align}
    Thus, we conclude that we get a perturbation $\mathbf{u_{tot}}$ which preserves the restricted \ac{el} equation, if
    \begin{align}
        \diracad{\phi(x)} \gamma_\mu \phi(x) = \alpha \partial^{\nu} F_{\nu \mu}(x).
    \end{align}
    This shows that in the case of two fermion types, we recover the well-known Dirac Maxwell equation.

    \section{Conclusion and Outlook}\label{sec:conclusion-and-outlook}

    In this paper, we have shown that the CFS current given by Definition~\ref{def:cfs-current}
    is a good tool for analyzing the linearized field equations of causal fermion system.
    We showed how a perturbation can be constructed which preserves the restricted \ac{el} equations.
    In addition, the probing Theorem~\ref{thm:probing} in combination with the expansion Theorem~\ref{thm:expansion} in the
    continuum limit gives a direct relation between the linearized field equations of causal fermion system and the tensorial
    equations of motion of the Standard Model and general relativity.

    In Sections~\ref{sec:04-iepsilon-regularlized-minkowski-spacetime} and~\ref{sec:the-continuum-limit-of-minkowski-spacetime}
    we showed, that the construction gives rise to the equations of motion of the Standard Model.
    In this paper, we only outlined the derivation of axial current and the Maxwell-Dirac equation.
    Similar to~\cite{cfs} the method should also be applicable to the other equations of motion of the Standard Model, including
    non-abelian gauge fields.

    In particular, the procedure presented in this paper gives a clear pathway for future research.

    \subsection{Scalar and Pseudo-Scalar Fields}\label{subsec:scalar-and-pseudoscalar-fields}
    First of all, the analysis of a scalar perturbation (analogously to the vectorial perturbation presented in
    Section~\ref{subsec:maxwell-current}) is expect to derive field equations for real scalar fields.
    For instance, suitable perturbations could lead to equations describing the Higgs field, which would complete
    the derivation of the Standard Model from causal fermion systems in the continuum limit.
    Similarly, considering also pseudo-scalar contributions allows to derive equations of motion for
    complex scalar fields.

    \subsection{General Relativity}\label{subsec:general-relativity}
    The framework presented in this paper opens a new path towards deriving the Einstein field equations of General Relativity.
    This can be achieved by considering higher-order probing of the CFS current.
    Specifically, analyzing the vanishing of the second-order derivative of the CFS current with respect to the wave evaluation operator,
    \begin{align}
        \label{eq:einstein-equation}
        J_x(\mathbf{u}) \partial_{\mu} \partial_{\nu} \Psi(x) = 0,
    \end{align}
    is expected to encode the Einstein equations.
    The derivation of the Einstein equations can be done analogously to the procedure presented in this work.
    The claim that the resulting equations really give back the Einstein equations is strongly supported by~\cite[Section 4.9]{cfs}.
    This formulation would naturally incorporate gravitational effects directly from the fundamental dynamics of causal fermion systems,
    without requiring a pre-existing spacetime manifold.
    Evaluating~\eqref{eq:einstein-equation} requires a careful analysis of the continuum limit.
    It is expected that the leading contribution originates from the second iteration step in the derivation of the bosonic perturbations
    leading to quadratic factors of the vector potentials, whereas the contribution for the fermionic perturbation
    originates from the leading error terms of the approximation~\eqref{eq:localization-approximation}, i.e.
    \begin{align}
        \label{eq:expansion-of-fermionic-perturbation}
        D_{\mathbf{u}} P(x, y) \approx \frac{1}{2 \pi} \phi \diracad{\phi}
        + \frac{1}{2 \pi} \xi^{\mu} \left( \partial_\mu \phi \diracad{\phi} - \phi \diracad{\partial_\mu \phi } \right).
    \end{align}
    In addition, the metric contribution originates from a perturbation of the measure itself~\cite[Section 4.9]{cfs}.

    \subsection{Planck-Scale Corrections}\label{subsec:planck-scale-corrections}
    The systematic nature of our approach also allows for the derivation of Planck-scale corrections to the standard equations of motion.
    These corrections originate from the different error terms arising form the approximations done in the continuum limit
    (see Section~\ref{subsec:analysis-of-the-uv-divergence}):
    \begin{enumerate}
        \item Including the vectorial contribution $a_\mu \gamma^\mu$ to the closed chain (and therefore in the eigenvalues
        and the corresponding eigenspace projectors) gives rise to terms of orders $m \varepsilon$ in the derived equation for the perturbations.

        \item While our current analysis approximated fermionic fields and vector potentials as almost constant on the Planck scale,
        considering more refined approximations for the perturbations, such as~\eqref{eq:expansion-of-fermionic-perturbation},
        implies contributions of the energy-momentum tensor of order $m \varepsilon$.

        \item The error terms of the approximation done in~Appendix~\ref{subsec:computation-of-the-maxwell-current}
        to show the proportionality of the structure functions of Maxwell and Dirac perturbations also have to be studied more
        systematically, potentially leading to additional contributions of order $m \varepsilon$.
    \end{enumerate}
    All these contributions are expected to break the Lorentz invariance of the resulting equations.
    This lies in the nature of the chosen regularization scheme.
    However, a systematic analysis of different regularization schemes potentially allows to identify an optional regularization
    revealing information about the underlying quantumness of spacetime.

    \section{The Regularized Light-Cone Expansion}\label{sec:the-regularized-light-cone-expansion}.

    In this section, we proof Proposition~\ref{prop:electro-magnetic-solution}, which gives a formal solution for equation~\ref{eq:klein-gordon-equation-for-delta-t}.
    With the parameter $a \defeq m^2$, we rewrite this equation to
    \begin{align}
        \label{eq:klein-gordon-equation-for-delta-t-with-a}
        (\dAlembertian + a)
        \Delta T_a[A](x, y) = A(x) T^{(0)}_a(x, y).
    \end{align}
    We denote the derivatives of $T^{(0)}_a$ with respect to $a$ by
    \begin{align}
        T^{(n)}_a(x, y) &\defeq \pdv[n]{a} T^{(0)}_a (x, y) &\text{ for } n \in \N.
    \end{align}
    Note that $T^{(0)}_a(x, y)$ satisfies the Klein Gordan equation
    \begin{align}
    (\dAlembertian + a)
        T^{(0)}_a(x, y) = 0.
    \end{align}
    In addition, the derivatives $T^{(n)}_a$ have the following properties:

    \begin{lemma}
        \label{lem:tn-properties}
        Let $a > 0$ and $T_a(x, y)$ be defined as above, then $T^{(n)}_a(x, y)$ satisfies
        \begin{enumerate}
            \item $\displaystyle \partial^{(x)}_\mu T^{(n + 1)}_a(x, y) = \frac{\xi_\mu}{2} T^{(n)}_a(x, y)$,
            \item $\displaystyle (\dAlembertian^{(x)} + a) T^{(n + 1)}_a(x, y) = - (n + 1) T^{(n)}_a(x, y)$
        \end{enumerate}
        for every $n \in \N$.
    \end{lemma}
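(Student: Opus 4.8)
The plan is to deduce both identities from two facts about $T^{(0)}_a$ that are already available in the text: the closed form $T^{(0)}_a(x,y) = \frac{a}{8\pi^3}\,z^{-1}K_1(z)$ with $z = \sqrt{-\,a\,\xi_\mu\xi^\mu}$ and $a = m^2$, and the Klein--Gordon equation $(\dAlembertian^{(x)} + a)\,T^{(0)}_a = 0$. For $\varepsilon > 0$ the regularized tangent vector $\xi^\mu = (y^0 - x^0 - i\varepsilon,\, \vec y - \vec x)$ satisfies $\xi_\mu\xi^\mu \notin [0,\infty)$ unless $y^0 = x^0$, in which case $\xi_\mu\xi^\mu < 0$; hence $z$ never vanishes and $T^{(0)}_a$ is jointly smooth in $(x,y)$ and analytic in $a$, so all differentiations below, and all exchanges of $\partial_a$ with $\partial^{(x)}_\mu$ and $\dAlembertian^{(x)}$, are legitimate. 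Throughout I use $\partial^{(x)}_\mu \xi^\nu = -\delta_\mu^\nu$, whence $\partial^{(x)}_\mu(\xi_\nu\xi^\nu) = -2\xi_\mu$ and $\partial^{(x)}_\mu z = a\,\xi_\mu/z$.

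I would treat (2) first, as it needs nothing beyond the Klein--Gordon equation. Apply $\partial_a^{n+1}$ to $(\dAlembertian^{(x)} + a)T^{(0)}_a = 0$. Since differentiating the operator $\dAlembertian^{(x)} + a$ once in $a$ gives the identity $\id$ and twice gives $0$, the Leibniz rule leaves only two terms,
\begin{align*}
  (\dAlembertian^{(x)} + a)\,\partial_a^{n+1}T^{(0)}_a \;+\; (n+1)\,\partial_a^{n}T^{(0)}_a \;=\; 0,
\end{align*}
which by definition of $T^{(k)}_a$ is exactly $(\dAlembertian^{(x)} + a)T^{(n+1)}_a = -(n+1)\,T^{(n)}_a$.

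For (1) the key reduction is that it suffices to establish the case $n=0$, i.e.\ $\partial^{(x)}_\mu T^{(1)}_a = \tfrac12\,\xi_\mu\,T^{(0)}_a$: since $\xi_\mu$ is independent of $a$, applying $\partial_a^{n}$ to this identity and commuting $\partial_a^{n}$ past $\partial^{(x)}_\mu$ yields $\partial^{(x)}_\mu T^{(n+1)}_a = \tfrac12\,\xi_\mu\,T^{(n)}_a$ for all $n$. The base case is a short Bessel computation from the closed form. First I simplify $T^{(1)}_a = \partial_a\!\big(\tfrac{a}{8\pi^3}z^{-1}K_1(z)\big)$: using $\partial_a z = -\,\xi_\mu\xi^\mu/(2z)$, the identity $\frac{d}{dz}\big(z^{-1}K_1(z)\big) = -\,z^{-1}K_2(z)$, the relation $a\,\xi_\mu\xi^\mu = -z^2$, and the recurrence $K_2(z) = K_0(z) + \tfrac{2}{z}K_1(z)$, the two resulting terms collapse to $T^{(1)}_a = -\tfrac{1}{16\pi^3}K_0(z)$. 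Then, with $K_0'(z) = -K_1(z)$ and $\partial^{(x)}_\mu z = a\xi_\mu/z$,
\begin{align*}
  \partial^{(x)}_\mu T^{(1)}_a \;=\; -\tfrac{1}{16\pi^3}\,K_0'(z)\,\partial^{(x)}_\mu z \;=\; \tfrac{1}{16\pi^3}\,\frac{a\,\xi_\mu}{z}\,K_1(z) \;=\; \tfrac12\,\xi_\mu\,T^{(0)}_a .
\end{align*}
Alternatively, both parts can be read off at once from a Schwinger representation $T^{(0)}_a(x,y) = c\int_0^\infty s^{-2}\,e^{-as + \xi_\mu\xi^\mu/(4s)}\,ds$ (obtained from the momentum integral~\eqref{eq:minkowski-kernel-of-the-fermionic-projector}, or from the integral representation of $K_1$), for which $T^{(n)}_a = c\,(-1)^n\!\int_0^\infty s^{n-2}e^{-as + \xi_\mu\xi^\mu/(4s)}\,ds$; then (1) follows because $\partial^{(x)}_\mu$ brings out the factor $-\xi_\mu/(2s)$, and (2) follows because $(\dAlembertian^{(x)} + a)$ turns $s^2$ times the integrand into a total $s$-derivative, so that integrating by parts in $s$ produces the factor $-(n+1)$.

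The only step where care is genuinely needed is this integral-representation variant: the $s$-integral converges, and the integration by parts has vanishing boundary terms, only because of the $i\varepsilon$ built into $\xi^0$, so a short convergence/decay estimate would have to be supplied. The closed-form argument of the previous two paragraphs avoids this altogether, relying only on the stated Klein--Gordon equation for $T^{(0)}_a$ and on standard recurrences for modified Bessel functions; that is the route I would write up.
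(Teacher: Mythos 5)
Your proof is correct, but it takes a genuinely different route from the paper's for part (1). The paper argues entirely in momentum space: writing $T^{(n)}_a(k)=(-1)^n\delta^{(n)}(k^2-a)\,\Theta(-k_0)$, it gets (1) from $\partial_{k^\mu}T^{(n)}_a(k)=-2k_\mu T^{(n+1)}_a(k)$ (the potential extra term proportional to $\delta^{(n)}(k^2-a)\,\delta(-k_0)$ vanishes precisely because $a>0$) followed by an integration by parts in $k$ that converts the factor $-ik_\mu$ into $\tfrac{1}{2}\xi_\mu$, and it gets (2) by differentiating the distributional identity $(k^2-a)\,T^{(0)}_a(k)=0$ a total of $(n+1)$ times in $a$ and Fourier transforming. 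Your argument for (2) is the position-space counterpart of the same idea (Leibniz applied to $(\dAlembertian^{(x)}+a)T^{(0)}_a=0$), whereas for (1) you instead verify the base case $n=0$ by an explicit Bessel computation from the closed form $T^{(0)}_a=\tfrac{a}{8\pi^3}z^{-1}K_1(z)$ — your intermediate result $T^{(1)}_a=-\tfrac{1}{16\pi^3}K_0(z)$ and the recurrences you invoke are correct — and then bootstrap to all $n$ by applying $\partial_a^n$. What the paper's route buys is uniformity and independence from the explicit closed form: it needs only the defining momentum integral and uses $a>0$ exactly once, to discard the $\delta(-k_0)$ term. Your route leans on the closed Bessel form (stated in the paper only for $a=m^2$, though the rescaling to general $a>0$ is immediate) and on the smoothness and analyticity away from the branch cut supplied by the $i\varepsilon$-regularization, which you correctly verify; in exchange it avoids manipulating derivatives of delta distributions, and the only analytic point left to check is the interchange of $\partial_a$ with $\partial^{(x)}_\mu$ and $\dAlembertian^{(x)}$, which your regularity observation covers. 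Your remark that the Schwinger-type integral representation would require a separate convergence and boundary-term estimate is also apt; the route you chose sidesteps that issue.
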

    \begin{proof}
        1) By definition, we have
        \begin{align*}
            T^{(0)}_a(x, y) &= \int \frac{\dd^4 k}{(2 \pi)^4} \delta(k^2 - a) \Theta(-k_0) e^{ik\xi} = \int \frac{\dd^4 k}{(2 \pi)^4} T^{(0)}_a(k) e^{ik\xi}\\
            T^{(n)}_a(x, y) &= \pdv[n]{a} T_a (x, y) = \int \frac{\dd^4 k}{(2 \pi)^4} (-1)^n \delta^{(n)}(k^2 - a) \Theta(-k_0) e^{ik\xi} \nonumber \\
            &= \int \frac{\dd^4 k}{(2 \pi)^4} T^{(n)}_a(k) e^{ik\xi}.
        \end{align*}
        Thus,
        \begin{align*}
            \pdv{k^\mu} T^{(n)}_a(k) = 2 k_\mu (-1)^n \delta^{(n + 1)}(k^2 - a) \Theta(-k_0) - (-1)^n \delta_\mu^0 \delta^{(n)}(k^2 - a) \delta(-k_0).
        \end{align*}
        However, since $a > 0$, the last term is always 0.
        Hence,
        \begin{align*}
            \pdv{k^\mu} T^{(n)}_a(x, y) = 2 k_\mu (-1)^n \delta^{(n + 1)}(k^2 - a) \Theta(-k_0) = - 2 k_\mu T^{(n + 1)}_a(k).
        \end{align*}
        and
        \begin{align*}
            \partial^{(x)}_\mu T^{(n + 1)}_a(x, y) &= \int \frac{\dd^4 k}{(2 \pi)^4} T^{(n + 1)}_a(k) (-i k_{\mu}) e^{ik\xi}\\
            &= \frac{i}{2} \int \frac{\dd^4 k}{(2 \pi)^4} \pdv{k^\mu} T^{(n)}_a(k) e^{ik\xi} \\
            &= - \frac{i}{2} \int \frac{\dd^4 k}{(2 \pi)^4} T^{(n)}_a(k) \pdv{k^\mu} e^{ik\xi}\\
            &= \frac{\xi_\mu}{2} \int \frac{\dd^4 k}{(2 \pi)^4} T^{(n)}_a(k) e^{ik\xi} = \frac{\xi_\mu}{2} T^{(n)}_a(x, y).
        \end{align*}

        \noindent 2) We know, that $(\dAlembertian^{(x)} + a) T^{(0)}_a(x, y) = 0$ and therefore $(k^2 - a) T^{(0)}_a(k) = 0$.
        Computing the $(n + 1)$-th derivative with respect to $a$ gives
        \begin{align*}
            0 &= \pdv[n + 1]{a} (k^2 - a) T^{(0)}_a(k) = (k^2 - a) T^{(n + 1)}_a(k) - (n + 1) T^{(n)}_a(k)\\
            &\Rightarrow (k^2 - a) T^{(n + 1)}_a(k) = (n + 1) T^{(n)}_a(k).
        \end{align*}
        The Fourier transformation of the above expression gives the proposed property.
    \end{proof}

    Property 1 of this lemma also justifies the notation $T^{(-1)}_{m^2}$ for the vector component of the kernel of the fermionic
    projector in~\eqref{eq:minkowski-vector-scalar-fermionic-projector}.
    We continue by checking that the formal series given in the Proposition~\ref{prop:electro-magnetic-solution}
    \begin{align}
        \Delta T_{a}[A](x, y) \defeq \sum_{n = 0}^{\infty} \frac{T_{a}^{(n + 1)}(x, y)}{n!} \int_0^1 (\tau - \tau^2)^n (\dAlembertian^n A)(x + \tau \xi) \dd \tau
    \end{align}
    is a solution of the perturbed Klein-Gordan~\eqref{eq:klein-gordon-equation-for-delta-t-with-a} in first order in $A$.

    \begin{proof}[Proof of Proposition~\ref{prop:electro-magnetic-solution}]
        First, we compute
        \begin{align*}
        (\dAlembertian^{(x)} &+ a)
            \Delta T_a[A](x, y) = \sum_{n = 0}^{\infty} \frac{1}{n!} \Bigg[ \\
            &(\dAlembertian^{(x)} + m^2) T^{(n + 1)}_a(x, y) \int_0^1 (\tau - \tau^2)^n (\dAlembertian^n A)(x + \tau \xi) \dd \tau \\
            &+ T^{(n + 1)}_a(x, y) \int_0^1 (\tau - \tau^2)^n \dAlembertian^{(x)} (\dAlembertian^n A)(x + \tau \xi) \dd \tau \\
            &+ 2 \partial^{(x)}_\mu T^{(n + 1)}_a(x, y) \int_0^1 (\tau - \tau^2)^n \partial^{(x) \mu} (\dAlembertian^n A)(x + \tau \xi) \dd \tau \Bigg].
        \end{align*}
        Applying the properties of elaborated in Lemma~\ref{lem:tn-properties} and the fact that
        \begin{align*}
            &\partial^{(x) \mu} (\dAlembertian^n A)(x + \tau \xi) = (1 - \tau) (\partial^\mu \dAlembertian^n A)(x + \tau \xi), \\
            &\Rightarrow \dAlembertian^{(x)} (\dAlembertian^n A)(x + \tau \xi) = (1 - \tau)^2 (\dAlembertian^{n + 1} A)(x + \tau \xi), \\
            &\xi_{\mu} (\partial^\mu \dAlembertian^n A)(x + \tau \xi) = \dv{\tau} (\dAlembertian^n A)(x + \tau \xi))
        \end{align*}
        then gives
        \begin{align*}
        (\dAlembertian^{(x)} &+ m^2)
            \Delta T[A](x, y) = \sum_{n = 0}^{\infty} \frac{1}{n!} \Bigg[ \\
            &- (n + 1) T^{(n)}(x, y) \int_0^1 (\tau - \tau^2)^n (\dAlembertian^n A)(x + \tau \xi) \dd \tau \\
            &+ T^{(n + 1)}(x, y) \int_0^1 (\tau - \tau^2)^n (1 - \tau)^2 (\dAlembertian^{n + 1} A)(x + \tau \xi) \dd \tau \\
            &+ T^{(n)}(x, y) \int_0^1 (\tau - \tau^2)^n (1 - \tau) \dv{\tau} (\dAlembertian^n A)(x + \tau \xi) \dd \tau \Bigg].
        \end{align*}
        Finally, integrating the last line by parts gives
        \begin{align*}
        (\dAlembertian^{(x)} &+ a)
            \Delta T[A]_a(x, y) = T^{(0)}_a(x, y) A(x) + \sum_{n = 0}^{\infty} \frac{1}{n!} \Bigg[ \\
            &- (n + 1) T^{(n)}_a(x, y) \int_0^1 (\tau - \tau^2)^n (\dAlembertian^n A)(x + \tau \xi) \dd \tau \\
            &+ T^{(n + 1)}_a(x, y) \int_0^1 (\tau - \tau^2)^n (1 - \tau)^2 (\dAlembertian^{n + 1} A)(x + \tau \xi) \dd \tau \\
            &- n T^{(n)}_a(x, y) \int_0^1 (\tau - \tau^2)^{n - 1} (1 - \tau)^2 (\dAlembertian^n A)(x + \tau \xi) \dd \tau \\
            &+ (n + 1) T^{(n)}_a(x, y) \int_0^1 (\tau - \tau^2)^{n - 1} (\dAlembertian^n A)(x + \tau \xi) \dd \tau \Bigg] \\
            &= T^{(0)}_a(x, y) A(x).
        \end{align*}
    \end{proof}

    Proposition~\ref{prop:electro-magnetic-solution} gives an exact solution of the perturbed~\eqref{eq:klein-gordon-equation-for-delta-t}.
    In the following, we show how to determine the perturbation of the kernel of the fermionic projector
    using this solution.
    We start with~\eqref{eq:minkowski-vector-scalar-fermionic-projector}
    \begin{align}
        P(x, y) = (i \slashed{\partial}^{(x)} + m) T^{(0)}_{m^2}(x, y).
    \end{align}
    then the defining equation for the perturbation $D_\mathbf{v} P$ is given by
    \begin{align}
    (i \slashed{\partial}^{(x)} - m)
        D_\mathbf{v} P(x, y) = A(x) P(x, y)
    \end{align}
    The inhomogeneous solution to this equation is then
    \begin{align}
        D_{\mathbf{v}} P(x, y) = \int \dd^4 z s(x, z) \slashed{A}(z) P(z, y)
    \end{align}
    where $s(x, z)$ is the Greens function of $i \slashed{\partial}^{(x)} - m$.
    Similar to $P(x, y)$, we can write
    \begin{align}
        s(x, z) = (i \slashed{\partial}^{(x)} + m) S(x, z)
    \end{align}
    where $S(x, z)$ is the Greens function of $\dAlembertian + m^2$.
    In addition, $T^{(0)}_{m^2}(x, y)$ satisfies
    \begin{align}
    (i \slashed{\partial}^{(x)} + m)
        T^{(0)}_{m^2}(x, y) = (-i \slashed{\partial}^{(y)} + m) T^{(0)}_{m^2}(x, y).
    \end{align}
    Thus,
    \begin{align}
        D_{\mathbf{v}} P(x, y) &= (i \slashed{\partial}^{(x)} + m) \gamma^\mu (-i \slashed{\partial}^{(y)} + m) \int \dd^4 z S(x, z) A_\mu(z) T^{(0)}_{m^2}(z, y)
    \end{align}
    By construction, $\int \dd^4 z S(x, z) A_\mu(z) T^{(0)}_{m^2}(z, y)$ is the inhomogeneous solution to equation~\ref{eq:klein-gordon-equation-for-delta-t}.
    By addition of a homogenous solution, we archive that $\left(D_{\mathbf{v}} P(x, y)\right)^* = D_{\mathbf{v}} P(y, x)$
    and thus
    \begin{align}
        D_{\mathbf{v}} P(x, y) = (i \slashed{\partial}^{(x)} + m) \gamma^\mu (-i \slashed{\partial}^{(y)} + m) \Delta T_{m^2}[A_\mu](x, y)
    \end{align}
    where $\Delta T_{m^2}[A_\mu](x, y)$ is given by the formal series in Proposition~\ref{prop:electro-magnetic-solution}.
    The first terms of $D_{\mathbf{v}} P(x, y)$ are then given by
    \begin{align}
        D_{\mathbf{v}} P(x, y) &= \frac{1}{2} \slashed{\xi} \int_0^1 A_{\mu}(x + \tau \xi) \xi^\mu \dd \tau T^{(-1)}_{m^2}(x, y) \nonumber \\
        &~~~- \frac{1}{2} \slashed{\xi} \xi^{\mu} \int_0^1 (\tau - \tau^2) (\partial^\nu F_{\mu \nu})(x + \tau \xi) \dd \tau T^{(0)}_{m^2}(x, y) \nonumber \\
        &~~~+ \frac{1}{4} \slashed{\xi} \gamma^{\mu} \gamma^{\nu} \int_0^1 F_{\mu \nu}(x + \tau \xi) \dd \tau T^{(0)}_{m^2}(x, y) \nonumber \\
        &~~~- \xi^\mu \gamma^\nu \int_0^1 (1 - \tau) F_{\mu \nu}(x + \tau \xi) \dd \tau T^{(0)}_{m^2}(x, y) \nonumber \\
        &~~~- \gamma^{\mu} \int_0^1 (1 - \tau)^2 (\partial^\nu F_{\mu \nu})(x + \tau \xi) \dd \tau T^{(1)}_{m^2}(x, y),
    \end{align}
    where $F_{\mu\nu}(x) \defeq \partial_\mu A_\nu(x) - \partial_\nu A_\mu(x)$ is the usual electro magnetic field tensor

    \section{Computation of the CFS Current in the Continuum Limit}\label{sec:computation-of-the-cfs-current}

    In this section, we outline the calculations of the current contribution for the Dirac and Maxwell perturbation.
    All the calculations are done in the continuum limit, i.e.~we work in the setting derived in Section~\ref{subsec:analysis-of-the-uv-divergence}
    In this case, we have
    \begin{align*}
        P(x, y) &= \frac{i}{2} T^{(-1)}_m(x, y) \slashed{\xi}, \\
        \lambda_{\pm} &= \frac{\abs{T^{(-1)}_m(x, y)}^2}{4} \left( \xi^\mu \complexconj{\xi_\mu} \pm 2 i \varepsilon r \right), \\
        \Lambda_{(\pm, L/R)} &= \frac{1}{4} \left( 1 \mp_{L/R} \gamma^5 \right) \left( 1 \pm i\frac{\xi_k}{r} \Gamma^k \right).
    \end{align*}

    \subsection{Dirac Current}\label{subsec:computation-of-the-dirac-current}

    For the Dirac current we consider the perturbation
    \begin{align*}
        D_{\mathbf{u}} P(x, y) = \frac{1}{2 \pi} \phi(x) \diracad{\phi(y)} \approx \frac{1}{2\pi}  \phi\left(\frac{x + y}{2}\right) \diracad{\left(\frac{x + y}{2}\right)}
    \end{align*}
    A direct computation then gives
    \begin{align}
        \label{eq:dirac-current-perturbation}
        \Re \tr \left[ \complexconj{\lambda_\pm} \Lambda_{(\pm, L/R)} D_\mathbf{u} A_{xy} \right] &= f^{(D)}_s(t, r) j_v^0 + f^{(D)}_a(t, r) j_v^r \nonumber \\
        &~~~\pm \mp_{L/R} \left(f^{(D)}_s(t, r) j_a^r + f^{(D)}_a(t, r) j_a^0\right)
    \end{align}
    with the real valued functions
    \begin{align}
        f^{(D)}_s(t, r) &= \frac{1}{8 \pi} \abs{T^{(-1)}_m(x, y)}^2 \left( 2 \varepsilon r^2 \Re \left[ T^{(-1)}_m(t, r) \right] - \xi^\mu \complexconj{\xi_\mu} \Im \left[ T^{(-1)}_m(t, r) \xi^0 \right] \right) \\
        f^{(D)}_a(t, r) &= \frac{1}{8 \pi} \abs{T^{(-1)}_m(x, y)}^2 \left( 2 \varepsilon r \Re \left[ T^{(-1)}_m(t, r) \xi^0 \right] - r \xi^\mu \complexconj{\xi_\mu} \Im \left[ T^{(-1)}_m(t, r) \right] \right)
    \end{align}
    and the usual Dirac currents
    \begin{align}
        j_v^{\mu}(x) \approx j_v^{\mu}\left(\frac{x + y}{2}\right) &= \diracad{\phi\left(\frac{x + y}{2}\right)} \gamma^{\mu} \phi\left(\frac{x + y}{2}\right), \\
        j_a^{\mu}(x) \approx j_a^{\mu}\left(\frac{x + y}{2}\right) &= \diracad{\phi\left(\frac{x + y}{2}\right)} \gamma^5 \gamma^{\mu} \phi\left(\frac{x + y}{2}\right).
    \end{align}
    The functions $f^{(D)}_s(t, r)$ and $f^{(D)}_a(t, r)$ describe the non-local part of the perturbation whereas the vector fields
    $j_v$ and $j_a$ the local contribution.
    The subscript $s$ and $a$ of the functions $f^{(D)}_s(t, r)$ and $f^{(D)}_a(t, r)$ stand for symmetric and antisymmetric
    with respect to a time flip, i.e.~the functions have the properties
    \begin{align}
        f^{(D)}_s(-t, r) = f^{(D)}_s(t, r) && \text{ and } && f^{(D)}_a(-t, r) = -f^{(D)}_a(t, r).
    \end{align}
    These characterizations are very useful to identify canceling terms in the integration of the current.
    However, before we continue with the integration, we first analyze the perturbation corresponding to a vector potential.

    \subsection{Maxwell Current}\label{subsec:computation-of-the-maxwell-current}

    For the Maxwell current we distinguish between an axial perturbation
    \begin{align}
        D_{\mathbf{u}} P(x, y) = (i \slashed{\partial}^{(x)} + m) \gamma^5 \gamma^\mu (-i \slashed{\partial}^{(y)} + m) \Delta T_{m^2}[A_\mu](x, y)
    \end{align}
    and a vectorial perturbation
    \begin{align}
        D_{\mathbf{u}} P(x, y) = (i \slashed{\partial}^{(x)} + m) \gamma^\mu (-i \slashed{\partial}^{(y)} + m) \Delta T_{m^2}[A_\mu](x, y).
    \end{align}
    A direct computation then gives
    \begin{align}
    \end{align}
    \begin{align}
        \Re \tr \left[ \complexconj{\lambda_\pm} \Lambda_{(\pm, L/R)} D_\mathbf{u} A_{xy} \right] &= f^{(M)}_s(t, r) \partial_\mu F^{\mu 0} + f^{(M)}_a(t, r) \frac{\xi_i}{r} \partial_\mu F^{\mu i} \nonumber \\
        &\text{ for vectorial perturbation}, \\
        \Re \tr \left[ \complexconj{\lambda_\pm} \Lambda_{(\pm, L/R)} D_\mathbf{u} A_{xy} \right] &= \pm \mp_{L/R} \left( f^{(M)}_s(t, r) \partial_\mu F^{\mu 0} + f^{(M)}_a(t, r) \frac{\xi_i}{r} \partial_\mu F^{\mu i} \right) \nonumber \\
        &\text{ for axial perturbation}
    \end{align}
    with the real valued functions
    \begin{align}
        f^{(M)}_s(t, r) &= - \frac{\abs{T^{(-1)}_m(x, y)}^2}{12} \bigg( 2 \varepsilon r^2 \Re \left[ \complexconj{T^{(1)}_m(x, y)} T^{(-1)}_m(x, y) \right] \nonumber \\
        &~~~ - \complexconj{\xi_\mu} \xi^{\mu} \Im \left[ \complexconj{T^{(1)}_m(x, y)} T^{(-1)}_m(x, y) \xi^0 \right] \bigg)
        - \frac{\abs{\lambda}^2}{12} \Im \left[ \frac{T^{(0)}_m(x, y)}{T^{(-1)}_m(x, y)} \xi^0 \right], \\
        f^{(M)}_a(t, r) &= -\frac{\abs{T^{(-1)}_m(x, y)}^2}{12} \bigg( 2 \varepsilon r \Re \left[ \complexconj{T^{(1)}_m(x, y)} T^{(-1)}_m(x, y) \xi^0 \right] \nonumber \\
        &~~~ -r \complexconj{\xi_\mu} \xi^{\mu} \Im \left[ \complexconj{T^{(1)}_m(x, y)} T^{(-1)}_m(x, y) \right] \bigg)
        - \frac{\abs{\lambda}^2}{12} r \Im \left[ \frac{T^{(0)}_m(x, y)}{T^{(-1)}_m(x, y)} \right].
    \end{align}
    Having determined the structure function for the Maxwell perturbation, we can now apply the same simplifaction rules as
    for the Dirac current to calculate the CFS current coefficients.

    The real coefficients of the corresponding Dirac matrics in the CFS current are up to an overall real
    factor equal to the one of the Dirac current.
    This can be seen as follows.
    First observe that
    \begin{align}
        \abs{\lambda} = \frac{\abs{T^{(-1)}_m(x, y)}^2}{4 m^2} \abs{z^2}
    \end{align}
    where $z = m \sqrt {- \xi_\mu \xi^\mu}$ is the argument of the Bessel functions in $T^{(n)}$ (see~\eqref{eq:def-bessel-function-argument}).
    Using this relation, we get
    \begin{align}
        \abs{\lambda}^2 \Im \left[ \frac{T^{(0)}_m(x, y)}{T^{(-1)}_m(x, y)} \xi^0 \right] &= -\frac{\abs{T^{(-1)}_m(x, y)}^2}{16 m^2} \bigg( 2 \varepsilon r^2 \Re \left[ \complexconj{z^2 T^{(0)}_m(x, y)} T^{(-1)}_m(x, y) \right]\nonumber \\
        &~~~- \xi^{\mu} \complexconj{\xi_\mu} \Im \left[ \complexconj{z^2 T^{(0)}_m(x, y)} T^{(-1)}_m(x, y) \xi^0 \right] \bigg), \\
        \abs{\lambda}^2 r \Im \left[ \frac{T^{(0)}_m(x, y)}{T^{(-1)}_m(x, y)} \right] &= -\frac{\abs{T^{(-1)}_m(x, y)}^2}{16 m^2} \bigg( 2 \varepsilon r \Re \left[ \complexconj{z^2 T^{(0)}_m(x, y)} T^{(-1)}_m(x, y) \xi^0 \right] \nonumber \\
        &~~~- \xi^{\mu} \complexconj{\xi_\mu} r \Im\left[ \complexconj{z^2 T^{(0)}_m(x, y)} T^{(-1)}_m(x, y) \right] \bigg).
    \end{align}
    The function $z^2 T^{(0)}_m(x, y)$ is analytic and in particular, smooth $x = y$.
    Thus, similar to the argument given above for the particle wavefunction and the vector potential, we approximate this function at $x = y$.
    This leads to a real factor
    \begin{align}
        z^2 T^{(0)}_m(x, y) \approx \left. z^2 T^{(0)}_m(x, y) \right|_{y = x} = \frac{m^2}{8 \pi^3}.
    \end{align}

    The function $T^{(1)}_m(x, y)$ contains a logarithmic pole on the light-cone.
    As shown in~\cite[Section~3.7]{cfs}, this pole can be absorbed by a so-called \textit{microlocal chiral transformation}, which
    is a again a certain perturbation of the kernel of the fermionic projector.
    The consideration of this perturbation allows us to work again with a smooth function instead of the $T^{(1)}_m(x, y)$.
    In addition, this perturbation requires the existence of at least three generations.
    The interested reader is referred to~\cite{cfs} for more details.
    For this paper, we just replace $T^{(1)}_m(x, y)$ by the smooth function $g(x, y) \approx g(x, x) \eqdef c \in \R$.
    Thus, we get
    \begin{align}
        \label{eq:structure-function-relations}
        f^{(M)}_s(t, r) = \left( \frac{1}{196 \pi^2} - \frac{2\pi}{3} c \right) f^{(D)}_s(t, r),\\
        f^{(M)}_a(t, r) = \left( \frac{1}{196 \pi^2} - \frac{2\pi}{3} c \right) f^{(D)}_a(t, r).
    \end{align}
    Thus, the structure functions of the Maxwell perturbation are up to an overall real factor equal to the ones of the Dirac perturbation.

    \subsection{Analysis of the Current Coefficients}\label{subsec:analysis-of-the-current-coefficients}

    In the CFS-current, the only term depending on the perturbation is the term
    \begin{align}
        \Re \tr \left[ \complexconj{\lambda_i} \Lambda_{i} D_\mathbf{u} A_{xy} \right]
    \end{align}
    which as shown in the previous two sections has the form as given in~\eqref{eq:dirac-current-perturbation}.
    The structure functions $f_s(t, r)$ and $f_a(t, r)$ define how the standard currents $j^\mu_{v/a}$ are related to
    the perturbation.

    Since we integrate over a symmetric domain with respect to the time coordinate, antisymmetric contributions cancel.
    Similarly, these functions are spherically symmetric as they only depend on the radial coordinate.
    Thus, the $S^2$ integration simplifies terms containing $\xi^i$ terms
    \begin{align}
        \int_{S^2} \xi^i \dd \Omega = 0 && \text{ and } && \int_{S^2} \frac{\xi^i \xi^j}{r^2} \dd \Omega = \frac{4 \pi}{3} \delta^{ij}
    \end{align}

    Applying these simplification rules to the expression occurring in $J_x (\mathbf{u}) \Psi(x)^*$ we get
    for one sector
    \begin{align}
        J_x (\mathbf{u}) \Psi(x)^* &= \frac{i}{3} C^{(0)} j^k_a \gamma^5 \Gamma_k
    \end{align}
    with the real constant
    \begin{align}
        C^{(0)} &= 4 \pi \int_{-\infty}^\infty \dd t \int_0^{\infty} \dd r ~ r^2 f_s(t, r)
    \end{align}
    For the CFS-Current with one derivative, we use Lemma~\ref{lem:tn-properties} to get
    \begin{align}
        \Psi(y) \partial_{\mu} \Psi(x) = \partial^{(x)}_{\mu} P(y, x) = \frac{\complexconj{\xi_\mu}}{2} \frac{\complexconj{T^{(-2)}_m(x, y)}}{\complexconj{T^{(-1)}_m(x, y)}} P(y, x)
        + \frac{i}{2} \complexconj{T^{(-1)}_m(x, y)} \gamma_{\mu}
    \end{align}
    Therefore, we have
    \begin{align}
        J_x (\mathbf{u}) \partial^0 \Psi(x)^* &= i C^{(t)}_{\text{Pseudo-Scalar}} ~ j_a^0 ~ \gamma^5 \nonumber \\
        &~~~ + C^{(t)}_{\text{Pseudo-Bilinear}} ~ j_a^k ~ \gamma^5 \Gamma_k,\\
        J_x (\mathbf{u}) \partial^k \Psi(x)^* &= i C^{(r)}_{\text{Pseudo-Scalar}} ~ j_a^k ~ \gamma^5 \nonumber \\
        &~~~ + C^{(r)}_{\text{Pseudo-Bilinear}} ~ j_a^0 \gamma^5 \Gamma^k
    \end{align}
    with
    \begin{align}
        C^{(t)}_{\text{Pseudo-Scalar}} &= -\frac{\pi}{2} \int_{-\infty}^\infty \dd t \int_0^{\infty} \dd r \frac{\abs{T^{(-1)}_m(x, y)}^4}{\abs{\lambda}^2} f_a(t, r) t r^3 \varepsilon \\
        C^{(r)}_{\text{Pseudo-Scalar}} &= -\frac{\pi}{12} \int_{-\infty}^\infty \dd t \int_0^{\infty} \dd r \frac{\abs{T^{(-1)}_m(x, y)}^4}{\abs{\lambda}^2} f_s(t, r) \left( t^2 + r^2 + \varepsilon^2 \right) r^2 \varepsilon \\
        C^{(t)}_{\text{Pseudo-Bilinear}} &= -\frac{\pi}{12} \int_{-\infty}^\infty \dd t \int_0^{\infty} \dd r \frac{\abs{T^{(-1)}_m(x, y)}^4}{\abs{\lambda}^2} f_s(t, r) \left( t^2 + r^2 + \varepsilon^2 \right) r^2 \varepsilon \nonumber \\
        &~~~ + \frac{2 \pi}{3} \int_{-\infty}^\infty \dd t \int_0^{\infty} \dd r f_s(t, r) r^2 \Im\left[ \frac{T^{(-2)}_m(x, y)}{T^{(-1)}_m(x, y)} \xi^0 \right] \nonumber \\
        &= C^{(r)}_{\text{Pseudo-Scalar}} + \frac{2 \pi}{3} \int_{-\infty}^\infty \dd t \int_0^{\infty} \dd r f_s(t, r) r^2 \Im \left[ \frac{T^{(-2)}_m(x, y)}{T^{(-1)}_m(x, y)} \xi^0 \right]
        \\
        C^{(r)}_{\text{Pseudo-Bilinear}} &= 2 \pi \int_{-\infty}^\infty \dd t \int_0^{\infty} \dd r f_a(t, r) \bigg( \frac{r^3}{3} \Im \left[ \frac{T^{(-2)}_m(x, y)}{T^{(-1)}_m(x, y)} \right] \nonumber \\
        &~~~- \frac{\abs{T^{(-1)}_m(x, y)}^4}{4 \abs{\lambda}^2} t r^3 \varepsilon \bigg) \nonumber \\
        &= C^{(t)}_{\text{Pseudo-Scalar}} + \frac{2 \pi}{3} \int_{-\infty}^\infty \dd t \int_0^{\infty} \dd r f_a(t, r) r^3 \Im \left[ \frac{T^{(-2)}_m(x, y)}{T^{(-1)}_m(x, y)} \right]
    \end{align}

    For two sectors, where the perturbation only acts on one of the two sectors, we get
    \begin{align}
        J_x (\mathbf{u}) \Psi(x)^* &= 2 C^{(0)} j^k_v + \frac{5i}{6} C^{(0)} j^k_a \gamma^5 \Gamma_k
    \end{align}
    and
    \begin{align}
        J_x (\mathbf{u}) \partial_0 \Psi(x)^* &= i C^{(t)}_{\text{Scalar}} j_v^0 + C^{(t)}_{\text{Bilinear}} j_v^k \Gamma_k + \frac{5}{2} \text{(axial current terms)}, \\
        J_x (\mathbf{u}) \partial^k \Psi(x)^* &= i C^{(r)}_{\text{Scalar}} j_v^k + C^{(r)}_{\text{Bilinear}} j_v^0 \Gamma_k + \frac{5}{2} \text{(axial current terms)}
    \end{align}
    with
    \begin{align}
        C^{(t)}_{\text{Scalar}} &= - \frac{\pi}{2} \int_{-\infty}^\infty \dd t \int_0^{\infty} \dd r f_s(t, r) \frac{\abs{T^{(-1)}_m(x, y)}^4}{\abs{\lambda}^2} (t^2 + r^2 + \varepsilon^2 ) r^2 \varepsilon \nonumber \\
        &~~~- 4 \pi \int_{-\infty}^\infty \dd t \int_0^{\infty} \dd r f_s(t, r) r^2 \Im\left[ \frac{T^{(-2)}_m(x, y)}{T^{(-1)}_m(x, y)} \xi^0 \right] \\
        C^{(r)}_{\text{Scalar}} &= -\frac{\pi}{3} \int_{-\infty}^\infty \dd t \int_0^{\infty} \dd r f_a(t, r) \frac{\abs{T^{(-1)}_m(x, y)}^4}{\abs{\lambda}^2} t r^3 \varepsilon \nonumber \\
        &~~~- \frac{4 \pi}{3} \int_{-\infty}^\infty \dd t \int_0^{\infty} \dd r f_a(t, r) r^3 \Im \left[ \frac{T^{(-2)}_m(x, y)}{T^{(-1)}_m(x, y)} \right]\\
        C^{(t)}_{\text{Bilinear}} &= \frac{\pi}{24} \int_{-\infty}^\infty \dd t \int_0^{\infty} \dd r \frac{\abs{T^{(-1)}_m(x, y)}^4}{\abs{\lambda}^2} f_s(t, r) \left( t^2 + r^2 + \varepsilon^2 \right) r^2 \varepsilon \nonumber \\
        &~~~+ \frac{4 \pi}{3} \int_{-\infty}^\infty \dd t \int_0^{\infty} \dd r \Im\left[ \frac{T^{(-2)}_m(x, y)}{T^{(-1)}_m(x, y)} \xi^0 \right] f_s(t, r) r^2 \\
        C^{(r)}_{\text{Bilinear}} &= - \frac{\pi}{2} \int_{-\infty}^\infty \dd t \int_0^{\infty} \dd r f_s(t, r) \frac{\abs{T^{(-1)}_m(x, y)}^4}{\abs{\lambda}^2}  \varepsilon \left( t^2 + r^2 + \varepsilon^2 \right) r^2
    \end{align}

    Note that all these constants depend linearly on the structure functions $f_s(t, r)$ and $f_a(t, r)$.
    When determining the coupling constants between the contributions of the different perturbations only the ratios
    are relevant.
    Due to~\eqref{eq:structure-function-relations} we get that all ratios of these constants are equal to
    \begin{align}
        \frac{1}{\alpha} \defeq \frac{1}{196 \pi^2} - \frac{2 \pi}{3} c.
    \end{align}

    \printbibliography
\end{document}